\newtheorem{theorem}{Theorem}
\newtheorem{lemma}[theorem]{Lemma}
\newtheorem{proposition}[theorem]{Proposition}
\newtheorem{remark}[theorem]{Remark}
\newenvironment{proof}[1][Proof]{\noindent\textbf{#1.} }{\ \rule{0.5em}{0.5em}}
\newdimen\dummy
\begin{document}

\begin{center}
{\large \textbf{\ \\[2mm]
Group of Canonical Diffeomorphisms }}

{\large \textbf{and the Poisson-Vlasov Equations}} \\[1cm]
Hasan G\"{u}mral\\[2mm]

Department of Mathematics, Yeditepe University

34755 Ata\c{s}ehir, \.{I}stanbul, Turkey

hgumral@yeditepe.edu.tr
\end{center}

\bigskip

\bigskip

\textbf{Abstract:} Dynamics of collisionless plasma described by the
Poisson-Vlasov equations is connected with the Hamiltonian motions of
particles and their symmetries. The Poisson equation is obtained as a
constraint arising from the gauge symmetries of particle dynamics.
Variational derivative constrained by the Poisson equation is used to obtain
reduced dynamical equations. Lie-Poisson reduction for the group of
canonical diffeomorphisms gives the momentum-Vlasov equations. Plasma
density is defined as the divergence of symplectic dual of momentum
variables. This definition is also given a momentum map description. An
alternative formulation in momentum variables as a canonical Hamiltonian
system with a quadratic Hamiltonian functional is described. A comparison of
one-dimensional plasma and two-dimensional incompressible fluid is presented.

\newpage

\section{Introduction}

The purpose of this series of papers is to study the geometric structures
underlying both kinematical descriptions and dynamical formulations of
plasma motion, and to provide a geometrical framework for the Poisson-Vlasov
equations

\begin{equation}
\nabla ^{2}\phi _{f}\left( \mathbf{q}\right) =-e\int f\left( \mathbf{z}%
\right) \text{ }d^{3}\mathbf{p}  \label{poi}
\end{equation}

\begin{equation}
\frac{\partial f}{\partial t}+\frac{\mathbf{p}}{m}\cdot \nabla _{q}f-e\nabla
_{q}\phi _{f}\cdot \nabla _{p}f=0  \label{vlasov}
\end{equation}%
of plasma dynamics. These works are initiated from a detailed study of an
unpublished review on plasma dynamics by Marsden and Ratiu \cite{mr93} and
inspired from the problems and ideas in Marsden and Morrison \cite{mm88}. We
refer to \cite{mr94}-\cite{chh98} for background materials and for
references on early works such as \cite{poi90}-\cite{chand77}. The
Poisson-Vlasov system was first written in Hamiltonian form by Morrison \cite%
{mor80}-\cite{mor82}. The Poisson bracket was then shown to be the
Lie-Poisson bracket on the space of plasma densities \cite{mw82}-\cite%
{mwrss83}. 

The kinematical or Lagrangian description will provide us with the
configuration space $G=Diff_{can}(T^{\ast }\mathcal{Q})$; the group of
canonical diffeomorphisms on the phase space $T^{\ast }\mathcal{Q}$ of
motions of individual plasma particles. We shall elaborate, in the next
section, the geometry of $Diff_{can}(T^{\ast }\mathcal{Q})$ which is the
framework for the Hamiltonian (Lie-Poisson) structure of the Poisson-Vlasov
equations.

In section three, we describe the Poisson equation as a kinematical
constraint on the dynamics of Eulerian variables. We show that the Poisson
equation characterizes the set of zero values of the momentum map associated
with the action of additive group of functions $\mathcal{F}(\mathcal{Q})$ on
the position space $\mathcal{Q}$ of particles. This is the gauge group of
particle motion on the canonical phase space $T^{\ast }\mathcal{Q}$.

Momentum map realization of the Poisson equation implies that the true
configuration space for the Poisson-Vlasov dynamics must be the semi-direct
product space $\mathcal{F}(\mathcal{Q})$ $\circledS $ $Diff_{can}(T^{\ast }%
\mathcal{Q})$ with the action of the additive group $\mathcal{F}(\mathcal{Q})
$ of functions given by fiber translation on $T^{\ast }\mathcal{Q}$ and by
composition on right with the canonical transformations. In order to adopt
the configuration space $Diff_{can}(T^{\ast }\mathcal{Q})$, which has been
customary in earlier treatments of the subject \cite{chh98},\cite{mw82}-\cite%
{mwrss83}, we rather proceed by adapting a constraint variational
derivative. In doing so, we implicitly take the advantage of the facts that
the Lie-Poisson structure on $\mathcal{F}^{\ast }(\mathcal{Q})$ is trivial
and that the constraint is of first class. The dual vector space $\mathfrak{g%
}^{\ast }$ of the Lie algebra $\mathfrak{g}\mathbf{=}\mathfrak{X}%
_{ham}(T^{\ast }\mathcal{Q})$ of Hamiltonian vector fields turns out to be
the space of non-closed one-form densities on $T^{\ast }\mathcal{Q}$. By
symplectic duality, this can be identified with the space of non-Hamiltonian
vector fields on $T^{\ast }\mathcal{Q}$. This gives the decomposition of the
tangent space $TT^{\ast }\mathcal{Q}=$ $\mathfrak{g}\mathbf{\oplus }(%
\mathfrak{g}^{\ast })^{\sharp }$ on which one can start a geometric
treatment for Legendre transformation \cite{tulc}. One of the main theme of
the present work is to introduce the dynamical equations on $(\mathfrak{g}%
^{\ast })^{\sharp }-$part of $TT^{\ast }\mathcal{Q}$ and make their
relations with the Vlasov-Poisson equations (\ref{poi}) and (\ref{vlasov})
precise.

In section four, we shall present the kinematical reduction of the dynamics
on $T^{\ast }G$. In the momentum coordinates $(\Pi _{i},\Pi ^{i})$ of $%
\mathfrak{g}^{\ast }$ the Poisson-Vlasov equations take the form%
\begin{equation}
\nabla ^{2}\phi _{\Pi }\left( \mathbf{q}\right) =e\int \nabla _{q}\cdot 
\mathbf{\Pi }_{p}\left( \mathbf{z}\right) d^{3}\mathbf{p}  \label{pipoi}
\end{equation}%
\begin{equation}
\frac{d\Pi _{i}\left( \mathbf{z}\right) }{dt}=-X_{h}(\Pi _{i}\left( \mathbf{z%
}\right) )+e\frac{\partial ^{2}\phi _{f}\left( \mathbf{q}\right) }{\partial
q^{i}\partial q^{j}}\Pi ^{j}\left( \mathbf{z}\right)   \label{piveq}
\end{equation}%
\begin{equation}
\frac{d\Pi ^{i}\left( \mathbf{z}\right) }{dt}=-X_{h}(\Pi ^{i}\left( \mathbf{z%
}\right) )-\frac{1}{m}\delta ^{ij}\Pi _{j}\left( \mathbf{z}\right) 
\label{piueq}
\end{equation}%
and they admit Lie-Poisson Hamiltonian structure with a Hamiltonian function
linear in momenta.

Section five will be devoted to the usual density formulation of the
Poisson-Vlasov equations. The reduced dynamics on $\mathfrak{g}^{\ast }$ has
a further symmetry given by the action of the additive group $\mathcal{F}%
(T^{\ast }\mathcal{Q})$ of functions on $T^{\ast }\mathcal{Q}$. The momentum
map $\mathfrak{g}^{\ast }\rightarrow \mathcal{F}^{\ast }(T^{\ast }\mathcal{Q}%
)=Den(T^{\ast }\mathcal{Q})$ defines the plasma density function $f$. 

In section six, we remark that Eqs.(\ref{piveq}) and (\ref{piueq}) are not
the only equations in momentum variables leading to the Vlasov equation. We 
present another set of equations described by a canonical Hamiltonian
structure with a quadratic Hamiltonian functional.

In section seven, we compare one dimensional plasma with two-dimensional
incompressible fluid. 

\section{Motion of Collisionless Plasma}

\subsection{Kinematical description}

We let $\mathcal{Q}\subset \mathcal{%
%TCIMACRO{\U{211d} }%
%BeginExpansion
\mathbb{R}
%EndExpansion
}^{3}$ denote the configuration space in which the plasma particles move.
The cotangent bundle $T^{\ast }\mathcal{Q}$ is the corresponding momentum
phase space. This has a natural symplectic structure given by the canonical
two-form $\Omega _{T^{\ast }\mathcal{Q}}=dq^{i}\wedge dp_{i},$ where we
employ the summation over the repeated indices. In the sequel, $\Omega
_{T^{\ast }\mathcal{Q}}^{\sharp }$ will denote the natural isomorphism $%
T^{\ast }T^{\ast }\mathcal{Q}\rightarrow TT^{\ast }\mathcal{Q}$ taking
uniquely a one-form to a vector field on $T^{\ast }\mathcal{Q}$ \cite{mr94},%
\cite{amr88}. We let $t\rightarrow \varphi _{t}$ be a curve such that for
each $t$, $\varphi _{t}$ is a canonical transformation of the particle phase
space $T^{\ast }\mathcal{Q}$ preserving the canonical symplectic two form $%
\Omega _{T^{\ast }\mathcal{Q}}$. Given a point $\mathbf{Z}\in T^{\ast }%
\mathcal{Q}$ regarded as a reference point or, a Lagrangian label, we let $%
\mathbf{z}=\varphi _{t}(\mathbf{Z})$, also written as $\mathbf{z}=(\mathbf{q}%
,\mathbf{p})=\varphi (\mathbf{Z},t)$, denote the current phase space point
or, the Eulerian coordinates of plasma particles. The phase space velocity
is given by the time dependent vector 
\begin{equation}
\mathbf{\dot{z}}=\frac{d}{dt}\varphi _{t}(\mathbf{Z})=X_{t}(\varphi _{t}(%
\mathbf{Z}))=X(\mathbf{z},t)
\end{equation}%
and it generates the flow $\varphi _{t}$. Since $\varphi _{t}$ is canonical, 
$X$ is infinitesimally Hamiltonian. We shall assume that it is globally
Hamiltonian and write $h(\mathbf{z},t)$ for the corresponding Hamiltonian
function so that at each $t$, $X=X_{h}$. 

\subsection{\label{group}Group of canonical diffeomorphisms}

The flow $\varphi _{t}$ of the Hamiltonian vector field $X_{h}$ on $T^{\ast }%
\mathcal{Q}$ is a one parameter family of elements of the group $%
G=Diff_{can}(T^{\ast }\mathcal{Q})$ of all transformations of $T^{\ast }%
\mathcal{Q}$ preserving the symplectic two-form $\Omega _{T^{\ast }\mathcal{Q%
}}$. In this work, we restrict the discussion to the canonical
transformations connected to the identity \cite{mr94},\cite{leo01}.$\
\varphi _{t}$ acts on left by evaluation on the space $T^{\ast }\mathcal{Q}$
of reference plasma configuration to produce the motion of particles. Thus,
a configuration of plasma can be specified by an element of \ $G$. The right
action of $G$ commutes with the particle motions and constitute an infinite
dimensional symmetry group of the kinematical description. This is the
particle relabelling symmetry \cite{ak}. For the motion of particles on $%
T^{\ast }\mathcal{Q}$ described by the left action 
\begin{equation}
G\times T^{\ast }\mathcal{Q}\longrightarrow T^{\ast }\mathcal{Q}:(\varphi ,%
\mathbf{Z})\rightarrow L_{\varphi }(\mathbf{Z})=\varphi (\mathbf{Z})=\mathbf{%
z}
\end{equation}%
the velocity field on $T^{\ast }\mathcal{Q}$ is the vector $X_{\varphi }$
lying in the tangent space 
\begin{equation}
T_{\varphi }G=\{X_{\varphi }:T^{\ast }\mathcal{Q}\rightarrow TT^{\ast }%
\mathcal{Q}|\;\;\tau _{T^{\ast }\mathcal{Q}}\circ X_{\varphi }=\varphi \}
\end{equation}%
of $G$ at $\varphi $. Here, $\tau _{T^{\ast }\mathcal{Q}}:TT^{\ast }\mathcal{%
Q}\rightarrow T^{\ast }\mathcal{Q}$ is the natural projection of the tangent
bundle of particle phase space. That means, $X_{\varphi }$ is a map over the
element $\varphi :T^{\ast }\mathcal{Q}\rightarrow T^{\ast }\mathcal{Q}$ of
the configuration space. Vector fields on $G$ are then defined to be the
maps $X:G\rightarrow TG$ whose value at $\varphi \in G$ is given by $%
X_{\varphi }\in T_{\varphi }G$. Since the motion is associated with the left
action, the velocity field $X$ is invariant under the right action $R_{\psi
}:G\rightarrow G$ of $G$ for all $\psi \in G$.

The space $\{X|\;(R_{\psi })_{\ast }X=TR_{\psi }\circ X\mathbf{\circ }\psi
^{-1}=X,\;\forall \psi \in G\}$ of right invariant vector fields on $G$ is
isomorphic to the tangent space over the identity mapping of $T^{\ast }%
\mathcal{Q}$ with the isomorphism given at each $\varphi \in G$ by 
\begin{equation}
X_{h}\rightarrow X_{h}\circ \varphi \;\;\;\forall \;X_{h}\in T_{id}G
\label{e-l}
\end{equation}%
where $X_{h}$ is the Hamiltonian vector field $\Omega _{T^{\ast }\mathcal{Q}%
}^{\sharp }(dh)$ on $T^{\ast }\mathcal{Q}$ whose flow is $\varphi $. Then,
the Lie algebra 
\begin{equation}
\mathfrak{g}=(\mathfrak{X}_{ham}(T^{\ast }\mathcal{Q})\;;\;-[\;,\;])
\end{equation}%
of $G$ consists of Hamiltonian vector fields on $T^{\ast }\mathcal{Q}$ and $%
[\;,\;]$ denotes the standard Jacobi-Lie bracket, with conventions as in 
\cite{amr88}. The isomorphism in Eq.(\ref{e-l}) is just the relation 
\begin{equation}
X_{\varphi }(\mathbf{Z},t)=X_{h}(\mathbf{z},t)=(X_{h}\circ \varphi _{t})(%
\mathbf{Z})
\end{equation}%
between the Lagrangian and the Eulerian velocities at the point $\mathbf{z}$%
. In coordinates, if we decompose a canonical transformation $\varphi $ into
position-momentum pairs as $\varphi =(\xi ,\eta )$, then, we have 
\begin{equation}
X_{\varphi }(\mathbf{Z},t)=\dot{\xi}^{i}\left( \mathbf{Z}\right) \dfrac{%
\partial }{\partial q^{i}}+\dot{\eta}_{i}\left( \mathbf{Z}\right) \dfrac{%
\partial }{\partial p_{i}}=\dot{q}^{i}\dfrac{\partial }{\partial q^{i}}+\dot{%
p}_{i}\dfrac{\partial }{\partial p_{i}}=X_{h}(\mathbf{z},t).
\end{equation}

By the identification $X_{h}\rightarrow h$ modulo constants, and the
homomorphism 
\begin{equation}
\lbrack X_{h},X_{k}]=-X_{\{h,k\}_{T^{\ast }\mathcal{Q}}},
\end{equation}%
we have the identification $\mathfrak{g}\simeq (\mathcal{F}(T^{\ast }%
\mathcal{Q});\{\;,\;\}_{T^{\ast }\mathcal{Q}})$ of the Lie algebra with the
space of functions on $T^{\ast }\mathcal{Q}$ endowed with the canonical
Poisson bracket.

We define the dual vector space $\mathfrak{g}^{\ast }$ of the Lie algebra $%
\mathfrak{g}\mathbf{=}\mathfrak{X}_{ham}(T^{\ast }\mathcal{Q})$ to be the
non-closed one-form densities on $T^{\ast }\mathcal{Q}$ 
\begin{equation}
\mathfrak{g}^{\ast }=\{\Pi _{id}\otimes d\mu \in \Lambda ^{1}(T^{\ast }%
\mathcal{Q})\otimes Den(T^{\ast }\mathcal{Q}):d\Pi _{id}\neq 0\}  \label{gst}
\end{equation}%
where $d\mu \in Den(T^{\ast }\mathcal{Q})$ is a volume six-form on $T^{\ast }%
\mathcal{Q}$. At each point $\mathbf{z\in }$ $T^{\ast }\mathcal{Q}$ the
space $\Lambda ^{6}(T^{\ast }\mathcal{Q})$ of six-forms is one-dimensional
whose basis can be chosen to be the symplectic volume $d\mu =\Omega
_{T^{\ast }\mathcal{Q}}^{3}=\Omega _{T^{\ast }\mathcal{Q}}\wedge \Omega
_{T^{\ast }\mathcal{Q}}\wedge \Omega _{T^{\ast }\mathcal{Q}}$. With these
definitions we have the non-degenerate pairing 
\begin{eqnarray}
\langle X_{h},\Pi _{id}\otimes d\mu \rangle  &=&\int_{T^{\ast }\mathcal{Q}%
}\;<X_{h}(\mathbf{z}),\Pi _{id}(\mathbf{z})>\;d\mu (\mathbf{z})  \notag \\
&=&\int_{T^{\ast }\mathcal{Q}}\;h(\mathbf{z})\,\nabla _{z}\cdot \mathbf{\Pi }%
_{id}^{\sharp }(\mathbf{z})\text{ }d\mu (\mathbf{z})\text{ \ \ \ }  \notag \\
&=&\langle h,\,\nabla _{z}\cdot \mathbf{\Pi }_{id}^{\sharp }\otimes d\mu
\rangle   \label{pair}
\end{eqnarray}%
of $\mathfrak{g}$ and $\mathfrak{g}^{\ast }$ with respect to the $L^{2}$%
-norm. Here, $\mathbf{\Pi }_{id}^{\sharp }$ denotes the components of the
vector $\Pi _{id}^{\sharp }=\Omega _{T^{\ast }\mathcal{Q}}^{\sharp }(\Pi
_{id})=\mathbf{\Pi }_{id}^{\sharp }(\mathbf{z})\cdot \nabla _{z}$ and the
pairing in the integrand is defined over the finite dimensional space $%
T^{\ast }\mathcal{Q}$. From the last line of the above equations we conclude
that the pairing of algebra with its dual is nondegenerate if $\nabla
_{z}\cdot \mathbf{\Pi }_{id}^{\sharp }(\mathbf{z})\neq 0$. Since, $\Omega
_{T^{\ast }\mathcal{Q}}$ is nondegenerate, this is equivalent to the
condition $d\Pi _{id}\neq 0$. The definition of dual space $\mathfrak{g}%
^{\ast }$ of the Lie algebra $\mathfrak{g}$ already implies that the
identification of vector space $\mathfrak{g}=\mathfrak{X}_{ham}(T^{\ast }%
\mathcal{Q})$ with the space of functions $\mathcal{F(}T^{\ast }\mathcal{Q)}$
can be extended to the identification of dual space with the space $%
Den(T^{\ast }\mathcal{Q})$ of densities on $T^{\ast }\mathcal{Q}$ via the
definition of density

\begin{equation}
f\left( \mathbf{z}\right) =\nabla _{z}\cdot \mathbf{\Pi }_{id}^{\sharp }(%
\mathbf{z})=\frac{\partial \Pi _{i}\left( \mathbf{z}\right) }{\partial p_{i}}%
-\frac{\partial \Pi ^{i}\left( \mathbf{z}\right) }{\partial q^{i}}\neq 0
\label{density}
\end{equation}%
which does not vanish due to nondegeneracy restriciton in the definition of $%
\mathfrak{g}^{\ast }$. That is, we have the one-sided correspondence

\begin{equation*}
\Pi _{id}\otimes d\mu \in \mathfrak{g}^{\ast }\longrightarrow f\,\text{\ }%
d\mu \in Den(T^{\ast }\mathcal{Q})
\end{equation*}%
with the definition of $f$ given by Eq.(\ref{density}) (see also the
internet supplement to \cite{mr94}). We shall show later the precise way of
obtaining density in the context of a momentum map. By the above
identifications with function spaces the pairing between Lie algebra $%
\mathfrak{g}\mathbf{\equiv }\mathcal{F(}T^{\ast }\mathcal{Q)}$ and its dual $%
\mathfrak{g}^{\ast }\equiv Den(T^{\ast }\mathcal{Q})$ takes the form of
multiply-and-integrate 
\begin{equation}
\langle h,f\otimes d\mu \rangle =\int_{T^{\ast }Q}h(\mathbf{z})f(\mathbf{z})%
\text{ }d\mu (\mathbf{z})\text{ .}  \label{pairfh}
\end{equation}

\begin{remark}
The configuration space $Diff_{can}(T^{\ast }\mathcal{Q})$ can also be given
a description in terms of sections of the trivial bundle $T^{\ast }\mathcal{Q%
}_{0}\mathcal{\times }T^{\ast }\mathcal{Q\rightarrow }T^{\ast }\mathcal{Q}%
_{0}$ where $T^{\ast }\mathcal{Q}_{0}$ is the particle phase space with
Lagrangian coordinates $\mathbf{Z}$ and $T^{\ast }\mathcal{Q}$ carries the
Eulerian coordinates $\mathbf{z}$. The total space $T^{\ast }\mathcal{Q}_{0}%
\mathcal{\times }T^{\ast }\mathcal{Q}$ is symplectic with the two-form $%
\Omega _{-}=\Omega _{T^{\ast }\mathcal{Q}_{0}}-\Omega _{T^{\ast }\mathcal{Q}%
} $. A diffeomorphism $\varphi :T^{\ast }\mathcal{Q}_{0}\mathcal{\rightarrow 
}T^{\ast }\mathcal{Q}$ is canonical if $\Omega _{T^{\ast }\mathcal{Q}%
_{0}}-\varphi ^{\ast }\Omega _{T^{\ast }\mathcal{Q}}=0$. It follows that $%
\Omega _{-}$ vanishes when restricted to the graphs of canonical
diffeomorphisms. Graphs are elements of the space $\Gamma (T^{\ast }\mathcal{%
Q}_{0}\mathcal{\times }T^{\ast }\mathcal{Q)}$ of sections of the trivial
bundle $T^{\ast }\mathcal{Q}_{0}\mathcal{\times }T^{\ast }\mathcal{%
Q\rightarrow }T^{\ast }\mathcal{Q}_{0}$. For a base point $\mathbf{Z\in }%
T^{\ast }\mathcal{Q}_{0}$, the total space is twelve dimensional and the
graph $\left( \mathbf{Z},\varphi \left( \mathbf{Z}\right) \right) $ of a
diffeomorphism is a six dimensional subspace. When $\varphi $ is canonical, $%
\Omega _{-}$ vanishes on graphs and such graphs are called Lagrangian
subspaces. If we denote the space of all sections of the trivial bundle on
which the restriction of $\Omega _{-}$ vanishes, namely the space of all
Lagrangian sections, by $Lag(\Gamma (T^{\ast }\mathcal{Q}_{0}\mathcal{\times 
}T^{\ast }\mathcal{Q))}$, then we have the identification 
\begin{equation*}
Diff_{can}(T^{\ast }\mathcal{Q})\simeq Lag(\Gamma (T^{\ast }\mathcal{Q}_{0}%
\mathcal{\times }T^{\ast }\mathcal{Q))}\text{ .}
\end{equation*}%
To have the corresponding description for the Lie algebra of Hamiltonian
vector fields, first observe that we can identify the dual space $\mathfrak{g%
}^{\ast }$ with the space of vector fields 
\begin{equation*}
(\mathfrak{g}^{\ast })^{\sharp }=\{\;\Pi _{id}^{\sharp }=\Omega _{T^{\ast }%
\mathcal{Q}}^{\sharp }(\Pi _{id})\in TT^{\ast }\mathcal{Q}\;|\;d\Pi
_{id}\neq 0\;\}
\end{equation*}%
which are not Hamiltonian by definition. Then, we have the decomposition of
the tangent space $TT^{\ast }\mathcal{Q}=$ $\mathfrak{g}\mathbf{\oplus }(%
\mathfrak{g}^{\ast })^{\sharp }$ into the underlying vector spaces of the
Lie algebra and its dual. It turns out that the Lie algebra of Hamiltonian
vector fields can be identified with the space of all Lagrangian
submanifolds of $\Gamma (TT^{\ast }\mathcal{Q)}$ with respect to the
Tulczyjew symplectic structure on $TT^{\ast }\mathcal{Q}$ \cite{gpd2}.
\end{remark}

To extend the pairing of $\mathfrak{g}$ and $\mathfrak{g}^{\ast }$ to a
pairing of tangent and cotangent bundles we define an element $\Pi _{\varphi
}$ of the covector space $\Lambda _{\varphi }^{1}G$ of one-forms as a map
over $\varphi $ 
\begin{equation}
\Lambda _{\varphi }^{1}G=\{\Pi _{\varphi }:T^{\ast }\mathcal{Q}\mapsto
T^{\ast }T^{\ast }\mathcal{Q}\;,\;\pi _{T^{\ast }\mathcal{Q}}\circ \Pi
_{\varphi }=\varphi \}
\end{equation}%
where $\pi _{T^{\ast }\mathcal{Q}}:T^{\ast }T^{\ast }\mathcal{Q\rightarrow }%
T^{\ast }\mathcal{Q}$ is the natural projection, and a 1-form field $\Pi $
on $G$ to be a section $\Pi :G\longrightarrow \Lambda ^{1}G$ such that for
each $\varphi \in G$ we have $\Pi _{\varphi }\in \Lambda _{\varphi }^{1}G$.
The cotangent space at $\varphi $ is then the space 
\begin{equation}
T_{\varphi }^{\ast }G=\{\Pi _{\varphi }\otimes d\mu \in \Lambda _{\varphi
}^{1}G\otimes Den(T^{\ast }\mathcal{Q})\}
\end{equation}%
of one-form densities on $G=Diff_{can}(T^{\ast }\mathcal{Q)}$. The pairing
of tangent and cotangent spaces at $\varphi \in G$ becomes 
\begin{equation}
\langle X_{\varphi },\Pi _{\varphi }\otimes d\mu \rangle =\int_{T^{\ast }%
\mathcal{Q}}\;<X_{\varphi }(\mathbf{Z}),\Pi _{\varphi }(\mathbf{Z})>\;d\mu (%
\mathbf{Z})\;.
\end{equation}%
Following diagram summarizes the mapping properties with reference to
particle phase space of elements of $T_{\varphi }G$, $T_{\varphi }^{\ast }G$%
, $\mathfrak{g}$ and $\mathfrak{g}^{\ast }$

\begin{equation*}
\begin{array}{ccccc}
& TT^{\ast }\mathcal{Q}\text{ \ \ } & \underleftrightarrow{\Omega _{T^{\ast }%
\mathcal{Q}}^{\flat },\text{ }\Omega _{T^{\ast }\mathcal{Q}}^{\sharp }} & 
\text{\ }T^{\ast }T^{\ast }\mathcal{Q} &  \\ 
\begin{array}{c}
\begin{array}{cc}
X_{\varphi } & \text{ \ }{\huge \nearrow }\text{ \ \ }%
\end{array}
\\ 
\mathstrut%
\end{array}
& 
\begin{array}{c}
\begin{array}{cc}
X_{h}{\Huge \uparrow } & {\Huge \downarrow }\tau _{T^{\ast }\mathcal{Q}}%
\end{array}
\\ 
\end{array}
&  & 
\begin{array}{c}
\begin{array}{cc}
\pi _{T^{\ast }\mathcal{Q}}{\Huge \downarrow } & {\Huge \uparrow }\Pi _{id}%
\end{array}
\\ 
\end{array}
& 
\begin{array}{c}
\begin{array}{cc}
\text{ \ \ \ }{\huge \nwarrow }\text{ \ } & \Pi _{\varphi }%
\end{array}
\\ 
\mathstrut%
\end{array}
\\ 
T^{\ast }\mathcal{Q}\text{ \ \ \ \ }\overrightarrow{\text{ \ \ }\varphi 
\text{ \ \ }} & T^{\ast }\mathcal{Q} & \equiv & T^{\ast }\mathcal{Q} & 
\overleftarrow{\text{ \ \ }\varphi \text{ \ \ }}\text{\ \ \ \ }T^{\ast }%
\mathcal{Q}%
\end{array}%
\text{ .}
\end{equation*}

The adjoint action $Ad_{\varphi }:\mathfrak{g}\rightarrow \mathfrak{g}$ of $G
$ on its Lie algebra is given by push forward of Hamiltonian vector fields
on $T^{\ast }\mathcal{Q}$ 
\begin{equation}
Ad_{\varphi }(X_{h})=T\varphi \circ X_{h}\circ \varphi ^{-1}=\varphi _{\ast
}(X_{h})\;.
\end{equation}%
The coadjoint action $(Ad_{\varphi })^{\ast }:\mathfrak{g}^{\ast
}\rightarrow \mathfrak{g}^{\ast }$ of $G$ on $\mathfrak{g}^{\ast }$ is the
dual $Ad_{\varphi ^{-1}}^{\ast }$ with respect to the pairing in Eq.(\ref%
{pair}) of the map $Ad_{\varphi ^{-1}}$ and is given by push forward as well

\begin{equation}
Ad_{\varphi ^{-1}}^{\ast }\left( \Pi _{id}\right) =\varphi _{\ast }\left(
\Pi _{id}\right) =T_{id}^{\ast }(R_{\varphi }\circ L_{\varphi ^{-1}})\circ
\Pi _{id}\text{ .}
\end{equation}%
Taking the derivatives at the identity one finds that the adjoint action $%
ad_{X_{h}}$ of $\mathfrak{g}$ on itself and the coadjoint action $%
ad_{X_{h}}^{\ast }$ of $\mathfrak{g}$ on $\mathfrak{g}^{\ast }$ are
generated by the Lie derivative with respect to the Hamiltonian vector field 
$X_{h}\in T$ $T^{\ast }\mathcal{Q}$. In other words, given $X_{h},X_{k}\in 
\mathfrak{g}$ and $\Pi _{id}\in \mathfrak{g}^{\ast }$, the Lie derivative $%
\mathcal{L}_{X_{h}}$ may be regarded as tangent vectors $ad_{X_{h}}\in
T_{X_{k}}\mathfrak{g}$ or $ad_{X_{h}}^{\ast }\in T_{\Pi _{id}}\mathfrak{g}%
^{\ast }$ depending on whether it is associated with adjoint or coadjoint
actions, respectively.

The cotangent lift to $T_{\psi }^{\ast }G$ of the right action is $%
T_{\varphi \circ \psi ^{-1}}^{\ast }R_{\psi }(\Pi _{\varphi })$ $=$ $\Pi
_{\varphi }\circ \psi ^{-1}.$ For $\psi =\varphi $ this gives the
translation of the one form $\Pi _{\varphi }$ at $\varphi $ to the identity
which we denote by $\Pi _{id}\equiv \Pi _{\varphi }\circ \varphi ^{-1}.$ By
definition, this is an element of the dual space $\mathfrak{g}^{\ast }$ of
the Lie algebra $\mathfrak{g}$ of Hamiltonian vector fields. The right
invariant momentum map $\mathbb{J}_{L}:T^{\ast }G\rightarrow \mathfrak{g}%
^{\ast }$ for the lifted left action (i.e. the plasma motion) is defined by 
\begin{eqnarray}
\left\langle \mathbb{J}_{L}\left( \Pi _{\varphi }\right) \otimes d\mu
,X_{h}\right\rangle  &=&\left\langle \Pi _{\varphi }\otimes d\mu
,T_{id}R_{\varphi }\circ X_{h}\right\rangle   \notag \\
&=&\left\langle \Pi _{id}\otimes d\mu ,X_{h}\right\rangle 
\end{eqnarray}%
so that we have $\mathbb{J}_{L}(\Pi _{\varphi })=\Pi _{id}\in \mathfrak{g}%
^{\ast }$ \cite{mr94}.

\section{Poisson Equation as a Constraint}

\subsection{Momentum map description of Poisson equation}

The canonical symplectic structure on $T^{\ast }\mathcal{Q}$ is invariant
under translation of fiber variable by an exact one-form over $\mathcal{Q}.$
This is the gauge transformation of canonical Hamiltonian formalism. If we
identify $T^{\ast }\mathcal{Q}$ with the space of one-forms $\Lambda ^{1}(%
\mathcal{Q})$ on $\mathcal{Q}$, this invariance may be described as the
Hamiltonian action on $T^{\ast }\mathcal{Q}$ 
\begin{equation}
\Lambda ^{0}(\mathcal{Q})\times \Lambda ^{1}(\mathcal{Q})\rightarrow \Lambda
^{1}(\mathcal{Q}):\left( \phi \left( \mathbf{q}\right) ,\mathbf{p\cdot }d%
\mathbf{q}\right) \rightarrow \mathbf{p\cdot }d\mathbf{q}+d\phi \left( 
\mathbf{q}\right) 
\end{equation}%
of the space $\Lambda ^{0}(\mathcal{Q})\equiv \mathcal{F}(\mathcal{Q})$ of
zero-forms$.$ From an algebraic point of view, the exterior derivative $%
d:\Lambda ^{0}(\mathcal{Q})\rightarrow \Lambda ^{1}(\mathcal{Q})$ can be
interpreted as a map describing a Lie algebra isomorphism of the additive
algebra of functions $\mathcal{F}(\mathcal{Q})$ into the additive algebra of
one-forms $\Lambda ^{1}(\mathcal{Q})$ \cite{gs}. As the dual of any Lie
algebra isomorphism is a momentum map, we have 
\begin{equation}
\mathbb{J}_{\mathcal{F}(\mathcal{Q})}:\Lambda ^{2}(\mathcal{Q})\rightarrow
Den(\mathcal{Q})
\end{equation}%
where the space of two forms $\Lambda ^{2}(\mathcal{Q})$ on $\mathcal{Q}$
and the space of densities (three-forms) on $\mathcal{Q}$ are the duals,
with respect to the $L^{2}-$norm, of $\Lambda ^{1}(\mathcal{Q})$ and $%
\Lambda ^{0}(\mathcal{Q})$, respectively. We can identify $\Lambda ^{1}(%
\mathcal{Q})$ and its dual space $\Lambda ^{2}(\mathcal{Q})$ by the Hodge
duality oparator $\ast $ associated with a Riemannian metric on $\mathcal{Q}$%
. Then 
\begin{equation}
\left\langle \mathbb{J}_{\mathcal{F}(\mathcal{Q})}\left( \mathbf{p\cdot }d%
\mathbf{q},\ast d\phi \left( \mathbf{q}\right) \right) ,\phi \left( \mathbf{q%
}\right) \right\rangle =-\int_{\mathcal{Q}}\phi \left( \mathbf{q}\right)
d\ast d\phi \left( \mathbf{q}\right) 
\end{equation}%
where $\ast d\phi \left( \mathbf{q}\right) $ is considered as a two-form
over the one-form $\mathbf{p\cdot }d\mathbf{q}$, and we used the pairing of $%
\Lambda ^{1}(\mathcal{Q})$ and $\Lambda ^{2}(\mathcal{Q})$ given by
integration over $\mathcal{Q}$. Thus, the momentum map is 
\begin{equation*}
\mathbb{J}_{\mathcal{F}(\mathcal{Q})}\left( \mathbf{p\cdot }d\mathbf{q},\ast
d\phi \left( \mathbf{q}\right) \right) =-d\ast d\phi \left( \mathbf{q}%
\right) \in Den(\mathcal{Q})\text{ .}
\end{equation*}%
For the Euclidean metric on $\mathcal{Q}$, the operator $d\ast d$ is the
usual Laplacian $\nabla _{q}^{2}$ in Cartesian coordinates.

The action of the additive group $\mathcal{F}(\mathcal{Q})$ of functions can
be carried over objects defined on $T^{\ast }\mathcal{Q}$ such as the group
of canonical diffeomorphisms, its Lie algebra and the dual of the Lie
algebra. For our purpose of obtaining the Poisson equation as a constraint
described by a momentum map, we restrict ourselves to the action of $%
\mathcal{F}(\mathcal{Q})$ on the space $\mathcal{F}(T^{\ast }\mathcal{Q})$
of functions on $T^{\ast }\mathcal{Q}$. It will be convenient to describe
the momentum map as the dual of some Lie algebra isomorphism into. We think
of $\mathcal{F}(T^{\ast }\mathcal{Q})$ equipped with the Poisson bracket to
be an algebra isomorphic to the Lie algebra of Hamiltonian vector fields
(c.f. section \ref{group}). Then, $\mathcal{F}(\mathcal{Q})$ is a
commutative subalgebra of $(\mathcal{F}(T^{\ast }\mathcal{Q}%
),\{\;,\;\}_{T^{\ast }\mathcal{Q}})$ corresponding to the generators of the
action on $T^{\ast }\mathcal{Q}$ by fiber translation $\mathbf{p}\mapsto 
\mathbf{p}-\nabla _{q}\phi \left( \mathbf{q}\right) $. The Hamiltonian
function is $-\phi \left( \mathbf{q}\right) \in \mathcal{F}(\mathcal{Q}%
)\subset \mathcal{F}(T^{\ast }\mathcal{Q})$. Thus, the required Lie algebra
isomorphism is from the additive algebra of functions $\mathcal{F}(\mathcal{Q%
})$ into the Poisson bracket algebra on $\mathcal{F}(T^{\ast }\mathcal{Q})$.
Together with the dualization we have%
\begin{equation*}
\begin{array}{ccc}
(\mathcal{F}(\mathcal{Q}),+) & 
\begin{array}{c}
\underrightarrow{\text{ \ \ }\ \ \text{ \ \ }} \\ 
\mathstrut%
\end{array}
& (\mathcal{F}(T^{\ast }\mathcal{Q}),\{\;,\;\}_{T^{\ast }\mathcal{Q}}) \\ 
\updownarrow &  & \updownarrow \\ 
Den(\mathcal{Q}) & 
\begin{array}{c}
\underleftarrow{\text{ \ }\mathbb{J}_{\mathcal{F}(\mathcal{Q})}\text{ \ }}
\\ 
\mathstrut%
\end{array}
& Den(T^{\ast }\mathcal{Q})%
\end{array}%
\end{equation*}%
and the momentum map $\mathbb{J}_{\mathcal{F}(\mathcal{Q})}:Den(T^{\ast }%
\mathcal{Q})\rightarrow Den(\mathcal{Q})$ is computed from 
\begin{equation}
\left\langle \mathbb{J}_{\mathcal{F}(\mathcal{Q})}\left( f\left( \mathbf{z}%
\right) d\mu \left( \mathbf{z}\right) \right) ,\phi \left( \mathbf{q}\right)
\right\rangle =-\int_{T^{\ast }\mathcal{Q}}f\left( \mathbf{z}\right) \phi
\left( \mathbf{q}\right) d\mu \left( \mathbf{z}\right)
\end{equation}%
to be the volume density 
\begin{equation}
\mathbb{J}_{\mathcal{F}(\mathcal{Q})}\left( f\left( \mathbf{z}\right) d\mu
\left( \mathbf{z}\right) \right) =-(\int_{T^{\ast }\mathcal{Q}}f\left( 
\mathbf{z}\right) \text{ }d^{3}\mathbf{p)}\text{ }d^{3}\mathbf{q}
\end{equation}%
on $\mathcal{Q}$ \cite{mar82}$.$ Combining the actions of $\mathcal{F}(%
\mathcal{Q})$ on $\Lambda ^{1}(\mathcal{Q})$ and $\mathcal{F}(T^{\ast }%
\mathcal{Q})$ we have the momentum map%
\begin{equation*}
\mathbb{J}_{\mathcal{F}(\mathcal{Q})}:\Lambda ^{2}(\mathcal{Q})\times
Den(T^{\ast }\mathcal{Q})\longrightarrow Den(\mathcal{Q})
\end{equation*}%
given by%
\begin{equation*}
\mathbb{J}_{\mathcal{F}(\mathcal{Q})}\left( \ast d\phi \left( \mathbf{q}%
\right) ,ef\left( \mathbf{z}\right) d\mu \left( \mathbf{z}\right) ;\phi
\left( \mathbf{q}\right) \right) =-(\nabla ^{2}\phi \left( \mathbf{q}\right)
+e\int f\left( \mathbf{z}\right) \text{ }d^{3}\mathbf{p})\text{ }d^{3}%
\mathbf{q}
\end{equation*}%
whose zero value is the Poisson equation. This constraints the region in the
product space $\Lambda ^{2}(\mathcal{Q})\times Den(T^{\ast }\mathcal{Q})$
for consideration of dynamics in the Eulerian variables $(\phi ,f)$, namely%
\begin{equation*}
\begin{array}{r}
\mathbb{J}_{\mathcal{F}(\mathcal{Q})}^{-1}(0)/\mathcal{F}(\mathcal{Q}%
)=\{(\ast d\phi ,fd\mu )\in \Lambda ^{2}(\mathcal{Q})\times Den(T^{\ast }%
\mathcal{Q})\mid \\ 
\nabla ^{2}\phi \left( \mathbf{q}\right) +e\int f\left( \mathbf{z}\right) 
\text{ }d^{3}\mathbf{p=}0\}%
\end{array}%
\end{equation*}%
is the reduced space for the Lie-Poisson description. This corresponds to a
subset of the dual space $\Lambda ^{2}(\mathcal{Q})\times \mathfrak{g}^{\ast
}$ of the Lie algebra $\Lambda ^{1}(\mathcal{Q})\times \mathfrak{g}$ with
trivial bracket on the first factor. The underlying Lie group is $\mathcal{F}%
(\mathcal{Q})\times Diff_{can}(T^{\ast }\mathcal{Q})$ with the first factor
acting on canonical diffeomorphisms by composition with fiber translations.

\subsection{Constraint variational derivative}

The momentum map description of Poisson equation implies that we have to
consider the configuration variables of collisionless plasma motion to be $%
\left( \phi ,\varphi \right) $ where the electrostatic potential $\phi $ is
a function on $\mathcal{Q}$ and $\varphi $ is a canonical diffeomorphism of $%
T^{\ast }\mathcal{Q}$ generating the particle motion. Hence, the
configuration space of plasma motion must be $\mathcal{F}(\mathcal{Q}%
)\circledS Diff_{can}(T^{\ast }\mathcal{Q})$ where $\circledS $ denotes the
semidirect product of groups with the additive group $\mathcal{F}(\mathcal{Q}%
)$ of functions acting on the second factor by composition on right. Here,
we want to adopt an approach allowing the possibility to use much simpler
configuration space $G=Diff_{can}(T^{\ast }\mathcal{Q})$. More precisely, we
want to use the Poisson equation as a constraint for the variational
derivatives of Eulerian variables, in particular, the plasma density
function.

To this end, we consider the Green's function solution 
\begin{equation}
\phi _{f}(q,t)=e\int_{T^{\ast }\mathcal{Q}}\;K(\mathbf{q}|\mathbf{\acute{q}}%
)f(\mathbf{\acute{z}})\;d\mu (\mathbf{\acute{z}})  \label{greep}
\end{equation}%
of the Poisson equation (\ref{poi}) which relates the plasma density $f$ and
the electrostatic potential $\phi _{f}$ \cite{mor80},\cite{mor81},\cite{kd84}%
. As an example of Eulerian quantities, we take the Hamiltonian function

\begin{equation}
H_{LP}(f)=\int_{T^{\ast }Q}f(\mathbf{z})h_{f}\left( \mathbf{z}\right) d\mu (%
\mathbf{z})  \label{piham}
\end{equation}%
of the Lie-Poisson formulation \cite{mor80},\cite{mor81},\cite{mar82},\cite%
{kd84}. Here, the density dependent function%
\begin{equation}
h_{f}\left( \mathbf{z}\right) =\frac{p^{2}}{2m}+\frac{1}{2}e\phi _{f}(%
\mathbf{q})  \label{hf}
\end{equation}%
is related to the Hamiltonian function $h$ governing the particle dynamics
up to a multiplicative factor in potential term. The Hamiltonian functional
in Eq.(\ref{piham}) is the total energy of the plasma in Eulerian
coordinates.

\begin{lemma}
\cite{mor81} For the functional in Eq.(\ref{piham}) with Eq.(\ref{hf}) we
have%
\begin{equation*}
{\frac{\delta H_{LP}(f)}{\delta f}}={\frac{1}{2m}}p^{2}+e\phi _{f}(\mathbf{q}%
)=h(\mathbf{z}).
\end{equation*}

\begin{proof}
Using the Poisson equation and the Green`s function solution, $H_{LP}(f)$
can be put into the form 
\begin{eqnarray}
H_{LP}(f) &=&\int_{T^{\ast }\mathcal{Q}}\;{\frac{1}{2m}}p^{2}f(\mathbf{z}%
)d\mu (\mathbf{z})  \notag \\
&&+{\frac{e^{2}}{2}}\int_{T^{\ast }\mathcal{Q}}\;\int_{T^{\ast }\mathcal{Q}%
}f(\mathbf{z})K(\mathbf{q}|\mathbf{\acute{q}})f(\mathbf{\acute{z}})\;d\mu (%
\mathbf{z})\;d\mu (\mathbf{\acute{z}})  \label{ede}
\end{eqnarray}%
up to the integral of the divergence term $\nabla _{q}\cdot (\phi _{f}(%
\mathbf{q})\nabla _{q}\phi _{f}(\mathbf{q}))$. It is now easy to obtain the
lemma where a factor of $2$ comes from the symmetry of the Green's function 
\cite{mor80},\cite{mor81},\cite{kd84}. 
\end{proof}
\end{lemma}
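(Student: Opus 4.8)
The plan is to compute the variational derivative $\delta H_{LP}/\delta f$ directly from the definition, namely by expanding $H_{LP}(f+\epsilon\, \delta f)$ to first order in $\epsilon$ and reading off the coefficient of $\delta f$ under the $L^2$ pairing. The essential simplification — already signalled in the statement — is that although $H_{LP}(f)$ appears to depend on $f$ both explicitly and implicitly through $\phi_f$, the implicit dependence is quadratic and symmetric, so it contributes a factor of $2$ that cancels the $\tfrac12$ in front of the potential term in $h_f$.

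The first step is to substitute the Green's function solution \eqref{greep} into \eqref{piham}, using \eqref{hf}, to rewrite $H_{LP}(f)$ in the manifestly quadratic form \eqref{ede}: a linear-in-$f$ kinetic piece $\int \tfrac{1}{2m}p^2 f\, d\mu$ plus a bilinear potential piece $\tfrac{e^2}{2}\iint f(\mathbf z) K(\mathbf q|\mathbf{\acute q}) f(\mathbf{\acute z})\, d\mu\, d\acute\mu$. I should be a little careful here about what is being discarded: passing from $\int f\, h_f\, d\mu$ with $h_f = \tfrac{p^2}{2m} + \tfrac12 e\phi_f$ to \eqref{ede} uses the Poisson equation to re-express one copy of $e\phi_f$ and then integrates by parts, producing a boundary term of the form $\int \nabla_q\cdot(\phi_f \nabla_q \phi_f)$, which vanishes (or is assumed to vanish) under the decay/periodicity hypotheses on $\phi_f$. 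Once \eqref{ede} is in hand the rest is routine.

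The second step is the actual variation. Differentiating \eqref{ede}: the kinetic term contributes $\tfrac{1}{2m}p^2$ to $\delta H_{LP}/\delta f$ immediately. For the bilinear term, varying $f \mapsto f + \epsilon\,\delta f$ hits each of the two factors of $f$; by the symmetry $K(\mathbf q|\mathbf{\acute q}) = K(\mathbf{\acute q}|\mathbf q)$ of the Green's function, the two resulting terms are equal, so the first-order variation is $e^2 \iint \delta f(\mathbf z) K(\mathbf q|\mathbf{\acute q}) f(\mathbf{\acute z})\, d\mu\, d\acute\mu = \int \delta f(\mathbf z)\, \big(e\, \phi_f(\mathbf q)\big)\, d\mu(\mathbf z)$, where the inner integral has been recognized as $e\phi_f$ via \eqref{greep} again. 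Collecting, $\delta H_{LP}/\delta f = \tfrac{1}{2m}p^2 + e\phi_f(\mathbf q)$, which by \eqref{hf} is $2h_f - \tfrac12 e\phi_f$... rather, comparing directly: it equals $h(\mathbf z)$, the true particle Hamiltonian (not $h_f$), since $h = \tfrac{p^2}{2m} + e\phi_f$ carries the full factor $e$ in the potential. This is exactly the asserted identity.

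**The main obstacle** is not any hard analysis but rather bookkeeping honesty about the discarded divergence term and the factor-of-$2$ combinatorics — it is easy to land on $\tfrac12 e\phi_f$ instead of $e\phi_f$ if one forgets either that the Poisson equation has already been used once to reach \eqref{ede} or that the symmetric bilinear form varies with a factor $2$. A secondary point worth a sentence is justifying that $\phi_f$ is genuinely linear in $f$ (so that \eqref{ede} really is quadratic with no higher corrections), which is immediate from the linearity of the Green's operator in \eqref{greep}. With those two points pinned down the computation is short, and indeed the author's proof compresses it to essentially one line.
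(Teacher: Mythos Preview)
Your proposal is correct and follows essentially the same route as the paper: rewrite $H_{LP}(f)$ via the Green's function solution as the kinetic term plus the symmetric bilinear form in $f$ (discarding the divergence $\nabla_q\cdot(\phi_f\nabla_q\phi_f)$), then vary to pick up the factor of $2$ from the symmetry of $K$. The only blemish is the stray aside ``$2h_f - \tfrac12 e\phi_f$'', which is arithmetically off, but you immediately correct course to the right identification $\tfrac{1}{2m}p^2 + e\phi_f = h(\mathbf z)$.
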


The constraint imposed by the Poisson equation is, in the language of Dirac
formalism, first class and hence does not affect the Poisson bracket on the
reduced space \cite{agw89}. Thus, in obtaining equivalent dynamical
formulations in alternative Eulerian variables we must use the same
constraint. The foremost example of such a variable arises from the
identification of the dual space $\mathfrak{g}^{\ast }$ of the algebra of
Hamiltonian vector fields with the space of densities $Den(T^{\ast }\mathcal{%
Q})$. In the more basic formulation of dynamics with the momentum variables $%
\Pi _{id}\in $ $\mathfrak{g}^{\ast }$ the Hamiltonian functional turns out
to be 
\begin{equation}
H_{LP}(\Pi _{id})=\int_{T^{\ast }\mathcal{Q}}\left\langle \Pi _{id}\left( 
\mathbf{z}\right) ,X_{h_{f}}\left( \mathbf{z}\right) \right\rangle \text{ }%
d\mu \left( \mathbf{z}\right)   \label{hlppi}
\end{equation}%
which is equivalent to the functional $H_{LP}(f)$ under the identification (%
\ref{density}). 

\begin{lemma}
For the Hamiltonian functional in Eq.(\ref{hlppi}) we have%
\begin{equation*}
\frac{\delta H_{LP}(\Pi _{id})}{\delta \Pi _{id}}=X_{h}(\mathbf{z}).
\end{equation*}
\end{lemma}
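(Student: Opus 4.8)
The plan is to compute the variational derivative of $H_{LP}(\Pi_{id})$ directly from its definition in Eq.~(\ref{hlppi}), exploiting the fact that the integrand is the pointwise pairing $\langle \Pi_{id}(\mathbf{z}), X_{h_f}(\mathbf{z})\rangle$ and that, through the definition of density in Eq.~(\ref{density}), this functional is literally the same as $H_{LP}(f)$ written in Eq.~(\ref{piham}). First I would take an arbitrary variation $\delta\Pi_{id} \in \mathfrak{g}^\ast$ and write
\begin{equation*}
\delta H_{LP}(\Pi_{id}) = \int_{T^\ast\mathcal{Q}} \langle \delta\Pi_{id}(\mathbf{z}), X_{h_f}(\mathbf{z})\rangle\, d\mu(\mathbf{z}) + \int_{T^\ast\mathcal{Q}} \langle \Pi_{id}(\mathbf{z}), \delta X_{h_f}(\mathbf{z})\rangle\, d\mu(\mathbf{z}),
\end{equation*}
where the second term accounts for the implicit dependence of $h_f$ on $f = \nabla_z\cdot\mathbf{\Pi}_{id}^\sharp$ through the Green's function solution (\ref{greep}). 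The first term, using the pairing identity (\ref{pair}), equals $\int h_f(\mathbf{z})\,\nabla_z\cdot(\delta\mathbf{\Pi}_{id}^\sharp)(\mathbf{z})\,d\mu = \int h_f(\mathbf{z})\,\delta f(\mathbf{z})\,d\mu$, which is exactly what one gets by varying $f$ in $H_{LP}(f)$ with $h$ held fixed at the "bare" value.

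The second step is to handle the variation of $h_f$ itself. Since $h_f(\mathbf{z}) = p^2/2m + \tfrac{1}{2}e\phi_f(\mathbf{q})$ with $\phi_f$ linear in $f$ via (\ref{greep}), we have $\delta X_{h_f} = X_{\delta h_f}$ with $\delta h_f = \tfrac{1}{2}e\,\delta\phi_f$, and $\delta\phi_f(\mathbf{q}) = e\int K(\mathbf{q}|\mathbf{\acute q})\,\delta f(\mathbf{\acute z})\,d\mu(\mathbf{\acute z})$. Substituting into the second integral and again using (\ref{pair}) to convert $\langle\Pi_{id},X_{\delta h_f}\rangle$ into $\int f\,\delta h_f\,d\mu$, this contribution becomes $\tfrac{1}{2}e\int f(\mathbf{z})\,\delta\phi_f(\mathbf{q})\,d\mu$. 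The symmetry of the Green's function $K(\mathbf{q}|\mathbf{\acute q}) = K(\mathbf{\acute q}|\mathbf{q})$ — precisely the factor of $2$ invoked in the proof of the previous lemma — lets me rewrite this as $\tfrac{1}{2}e\int \phi_f(\mathbf{q})\,\delta f(\mathbf{z})\,d\mu$, i.e. it combines with the potential half already present in $h_f$ to produce the full $e\phi_f$. Collecting the two contributions gives
\begin{equation*}
\delta H_{LP}(\Pi_{id}) = \int_{T^\ast\mathcal{Q}} \left( \tfrac{1}{2m}p^2 + e\phi_f(\mathbf{q}) \right)\delta f(\mathbf{z})\, d\mu(\mathbf{z}) = \int_{T^\ast\mathcal{Q}} h(\mathbf{z})\,\delta f(\mathbf{z})\, d\mu(\mathbf{z}),
\end{equation*}
which by the first lemma is $\int \frac{\delta H_{LP}(f)}{\delta f}\,\delta f\,d\mu$.

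The last step is to translate this back into a statement about $\delta/\delta\Pi_{id}$ rather than $\delta/\delta f$. Since $\delta f = \nabla_z\cdot(\delta\mathbf{\Pi}_{id}^\sharp)$ and the pairing (\ref{pair}) identifies $\langle h, \nabla_z\cdot\boldsymbol{\cdot}^\sharp\,d\mu\rangle$ with $\langle X_h, \boldsymbol{\cdot}\,d\mu\rangle$ by symplectic duality, we read off $\delta H_{LP}(\Pi_{id}) = \int \langle X_h(\mathbf{z}), \delta\Pi_{id}(\mathbf{z})\rangle\,d\mu$, whence $\delta H_{LP}(\Pi_{id})/\delta\Pi_{id} = X_h(\mathbf{z})$ as claimed. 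The main obstacle I anticipate is bookkeeping the factor of $\tfrac{1}{2}$ in the potential term of $h_f$ against the full $e\phi_f$ appearing in $h$: one must be careful that the self-consistent (implicit) variation of $\phi_f$ through $f$ contributes exactly the missing half, and this is where the symmetry of $K$ is essential — exactly as in the preceding lemma, which we are entitled to assume. A secondary point worth stating cleanly is that the identification $\mathfrak{g}^\ast \equiv Den(T^\ast\mathcal{Q})$ via (\ref{density}) is only one-sided, so the conclusion $\delta H_{LP}/\delta\Pi_{id} = X_h$ should be understood as the statement that $X_h$ is the Hamiltonian vector field of the function $\delta H_{LP}/\delta f = h$, consistent with the identification $\mathfrak{g} \simeq \mathcal{F}(T^\ast\mathcal{Q})$.
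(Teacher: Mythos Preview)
Your argument is correct. Note, however, that the paper does not actually supply a proof for this lemma: it is stated immediately after Lemma~1 and left to the reader, presumably because it follows from Lemma~1 together with the identification $H_{LP}(\Pi_{id})=H_{LP}(f)$ via Eq.~(\ref{density}) and the pairing in Eq.~(\ref{pair}). Your computation makes this implicit reasoning explicit: you vary the integrand directly, split into the ``bare'' term and the self-consistent term coming from the dependence of $h_f$ on $f$, and then invoke the symmetry of the Green's function to recover the missing half of the potential --- exactly the mechanism used in the proof of Lemma~1. The final step, converting $\int h\,\delta f\,d\mu$ back to $\int\langle X_h,\delta\Pi_{id}\rangle\,d\mu$ via Eq.~(\ref{pair}), is the only additional ingredient beyond Lemma~1, and you handle it correctly. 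In short, your proof is the natural unpacking of what the paper leaves unstated, and there is nothing to compare it against beyond the parallel with Lemma~1, which you follow faithfully.
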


\section{Momentum Formulation of Dynamics}

\subsection{Lie-Poisson dynamics}

We apply the standard Lie-Poisson reduction to $G$ \cite{mw82},\cite{mar82},%
\cite{mw74}. The right invariant extensions to $T^{\ast }G$ of elements of $%
\mathfrak{g}^{\ast }$ are obtained through composition with the momentum map 
$\mathbb{J}_{L}(\Pi _{\varphi })=\Pi _{id}$. In particular, for a functional 
$H:\mathfrak{g}^{\ast }\rightarrow 
%TCIMACRO{\U{211d} }%
%BeginExpansion
\mathbb{R}
%EndExpansion
$ the right invariant extension is the functional $H^{R}:T^{\ast
}G\rightarrow 
%TCIMACRO{\U{211d} }%
%BeginExpansion
\mathbb{R}
%EndExpansion
$ defined by%
\begin{equation*}
H^{R}=H\circ \mathbb{J}_{L}\;,\;\;\;H^{R}\left( \varphi ,\Pi _{\varphi
}\right) =H(\Pi _{id})=H\left( \Pi _{\varphi }\circ \varphi ^{-1}\right) 
\end{equation*}%
and applying the chain rule with $\Pi _{\varphi }=\Pi _{id}\circ \varphi $
we have the differential 
\begin{equation*}
\delta H^{R}=\frac{\delta H^{R}}{\delta \varphi }\delta \varphi +\frac{%
\delta H^{R}}{\delta \Pi _{\varphi }}\delta \Pi _{\varphi }=\frac{\delta H}{%
\delta \Pi _{id}}\frac{\delta \Pi _{id}}{\delta \varphi }\delta \varphi +%
\frac{\delta H}{\delta \Pi _{id}}\frac{\delta \Pi _{id}}{\delta \Pi
_{\varphi }}\delta \Pi _{\varphi }.
\end{equation*}%
If $\Pi _{id}(\mathbf{z})=\Pi _{a}(\mathbf{z})dz^{a}=\Pi _{i}(\mathbf{z}%
)dq^{i}+\Pi ^{i}(\mathbf{z})dp_{i}$ we have $(\delta \Pi _{id}/\delta \Pi
_{\varphi })\mid _{id}=1$ and

\begin{equation*}
\text{\ }\dfrac{\delta \Pi _{id}}{\delta \varphi }\mid _{id}=\dfrac{\delta }{%
\delta \varphi }\Pi _{a}(\mathbf{z})dz^{a}\mid _{id}=-d\mathbf{\Pi }_{id}(%
\mathbf{z})
\end{equation*}%
with $\mathbf{\Pi }_{id}$ denoting the components of the one-form $\Pi _{id}$%
. Thus, for the differentiation of the right-invariant functionals we find

\begin{equation}
\frac{\delta H^{R}}{\delta \Pi _{\varphi }}|_{id_{G}}=\frac{\delta H}{\delta
\Pi _{id}}\;,\;\;\;\;\frac{\delta H^{R}}{\delta \varphi }|_{id_{G}}=-%
\overrightarrow{(\frac{\delta H}{\delta \Pi _{id}}}\cdot \nabla _{z})\mathbf{%
\Pi }_{id}
\end{equation}%
where the overhead arrow denotes the components of Lie algebra element. One
can now evaluate the canonical Poisson bracket on $T^{\ast }G$ at the
identity using the above relations. This gives $(+)$Lie-Poisson bracket on $%
\mathfrak{g}^{\ast }$, that is, $\{F^{R},G^{R}\}_{T^{\ast }G}\mapsto
\{F,G\}_{+LP}$. 

\begin{proposition}
Let $\Pi _{id}\in \mathfrak{g}^{\ast }$ and $\left[ \text{ },\text{ }\right] 
$ be the Jacobi-Lie bracket on $\mathfrak{g}$. Then the Lie-Poisson bracket
on $\mathfrak{g}^{\ast }$ is given by%
\begin{equation}
\left\{ H(\Pi _{id}),K(\Pi _{id})\right\} _{LP}=\int_{T^{\ast }\mathcal{Q}%
}\Pi _{id}(\mathbf{z})\cdot \left[ \frac{\delta H}{\delta \Pi _{id}(\mathbf{z%
})},\frac{\delta K}{\delta \Pi _{id}(\mathbf{z})}\right] \text{ }d\mu (%
\mathbf{z})  \label{lppi}
\end{equation}%
where $\delta H/\delta \Pi _{id}$ is regarded to be an element of $\mathfrak{%
g}$.
\end{proposition}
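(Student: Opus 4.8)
The plan is to derive the Lie--Poisson bracket on $\mathfrak{g}^{\ast}$ by computing the canonical Poisson bracket on $T^{\ast}G$ of the right-invariant extensions $H^{R}=H\circ\mathbb{J}_{L}$ and $K^{R}=K\circ\mathbb{J}_{L}$ and then restricting to the identity, using the derivative formulas already established just above the statement. First I would write the canonical bracket on $T^{\ast}G$ in the ``field'' coordinates $(\varphi,\Pi_{\varphi})$ as
\begin{equation*}
\{F^{R},K^{R}\}_{T^{\ast}G}=\int_{T^{\ast}\mathcal{Q}}\left(\frac{\delta H^{R}}{\delta\varphi}\cdot\frac{\delta K^{R}}{\delta\Pi_{\varphi}}-\frac{\delta K^{R}}{\delta\varphi}\cdot\frac{\delta H^{R}}{\delta\Pi_{\varphi}}\right)d\mu,
\end{equation*}
and then substitute the identity-evaluated functional derivatives $\delta H^{R}/\delta\Pi_{\varphi}|_{id}=\delta H/\delta\Pi_{id}$ and $\delta H^{R}/\delta\varphi|_{id}=-(\,\overrightarrow{\delta H/\delta\Pi_{id}}\cdot\nabla_{z})\mathbf{\Pi}_{id}$, and likewise for $K$.

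The second step is the algebraic heart: after substitution the integrand becomes
\begin{equation*}
-\big((\overrightarrow{\textstyle\frac{\delta H}{\delta\Pi_{id}}}\cdot\nabla_{z})\mathbf{\Pi}_{id}\big)\cdot\frac{\delta K}{\delta\Pi_{id}}+\big((\overrightarrow{\textstyle\frac{\delta K}{\delta\Pi_{id}}}\cdot\nabla_{z})\mathbf{\Pi}_{id}\big)\cdot\frac{\delta H}{\delta\Pi_{id}},
\end{equation*}
and I would recognize this, after an integration by parts in $\mathbf{z}$ to move a derivative off $\mathbf{\Pi}_{id}$, as exactly $\Pi_{id}\cdot[\delta H/\delta\Pi_{id},\,\delta K/\delta\Pi_{id}]$ with the Jacobi--Lie bracket of the two Hamiltonian vector fields $\delta H/\delta\Pi_{id},\delta K/\delta\Pi_{id}\in\mathfrak{g}$. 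Concretely, writing the Lie algebra elements with components $\xi^{a}=\overrightarrow{\delta H/\delta\Pi_{id}}$ and $\zeta^{a}=\overrightarrow{\delta K/\delta\Pi_{id}}$, the identity to verify is the pointwise relation
\begin{equation*}
\xi^{a}\zeta^{b}\partial_{b}\Pi_{a}-\zeta^{a}\xi^{b}\partial_{b}\Pi_{a}\;\equiv\;\Pi_{a}\,[\xi,\zeta]^{a}\pmod{\text{divergence}},
\end{equation*}
which follows from $[\xi,\zeta]^{a}=\xi^{b}\partial_{b}\zeta^{a}-\zeta^{b}\partial_{b}\xi^{a}$ together with one integration by parts (the boundary terms vanish since everything has compact support or decays on $T^{\ast}\mathcal{Q}$, $\mathcal{Q}\subset\mathbb{R}^{3}$); here $\Pi_{id}\cdot X$ denotes the natural pointwise pairing $\langle\Pi_{id},X\rangle$ of a one-form with a vector field. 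Finally I would note the sign bookkeeping: since the Lie algebra $\mathfrak{g}$ carries the bracket $-[\;,\;]$ (the minus sign in the displayed definition of $\mathfrak{g}$ in section~\ref{group}) while the right-invariant reduction produces the $(+)$ Lie--Poisson bracket, the two sign conventions combine to give precisely the $+$ sign written in Eq.~(\ref{lppi}), consistent with the remark preceding the statement that $\{F^{R},G^{R}\}_{T^{\ast}G}\mapsto\{F,G\}_{+LP}$.

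The main obstacle I anticipate is not conceptual but a matter of keeping the infinite-dimensional bookkeeping honest: the functional derivative $\delta\Pi_{id}/\delta\varphi|_{id}=-d\mathbf{\Pi}_{id}$ was computed only at the identity, so the whole calculation must be understood as an evaluation of a bracket of functionals on $T^{\ast}G$ restricted to the fiber over $id_{G}$, and one must check that right-invariance of both $H^{R}$ and $K^{R}$ guarantees the answer depends only on $\Pi_{id}$ and agrees with the value of the candidate bracket at $\Pi_{id}$ — this is the standard Lie--Poisson reduction argument (\cite{mw74},\cite{mw82},\cite{mar82}) and I would invoke it rather than re-prove it. The remaining delicate point is making sure the integration by parts is justified for the class of one-form densities in $\mathfrak{g}^{\ast}$ as defined in Eq.~(\ref{gst}); since the pairing in Eq.~(\ref{pair}) was already assumed well defined with respect to the $L^{2}$-norm, I would take the requisite decay/support conditions as part of the standing hypotheses and not belabor them.
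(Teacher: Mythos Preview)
Your proposal is correct and follows precisely the route the paper itself sketches: the text immediately preceding the proposition sets up the right-invariant extensions $H^{R}=H\circ\mathbb{J}_{L}$, computes the functional derivatives $\delta H^{R}/\delta\Pi_{\varphi}|_{id}$ and $\delta H^{R}/\delta\varphi|_{id}$, and then simply says ``One can now evaluate the canonical Poisson bracket on $T^{\ast}G$ at the identity using the above relations,'' referring to the internet supplement of \cite{mr94} for the details. You have supplied exactly those details---the substitution into the canonical bracket and the integration-by-parts identification with $\Pi_{id}\cdot[\xi,\zeta]$ (using that Hamiltonian vector fields are divergence-free)---so there is nothing to add.
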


The derivation can also be found in the internet supplement to \cite{mr94}.
The Lie-Poisson dynamics on $\mathfrak{g}^{\ast }$ may be written as follows:

\begin{proposition}
The Hamiltonian vector fields on $\mathfrak{g}^{\ast }$ for the Lie-Poisson
structure defined by the bracket in Eq.(\ref{lppi}) have the form%
\begin{equation}
\frac{d\Pi _{id}}{dt}=-ad_{\delta H/\delta \Pi _{id}}^{\ast }(\Pi _{id})=%
\mathcal{L}_{\delta H/\delta \Pi _{id}}(\Pi _{id})=J_{LP}(\Pi _{id})\frac{%
\delta H}{\delta \Pi _{id}}  \label{pieq}
\end{equation}%
where the Hamiltonian operator defining the bracket (\ref{lppi}) is given by%
\begin{equation}
J_{LP}(\Pi _{id})=\left( 
\begin{array}{cc}
\Pi _{i}\frac{{\LARGE d}}{{\LARGE dq}^{j}}+\frac{{\Large d}}{{\Large dq}^{i}}%
\cdot \Pi _{j} & \text{ \ \ \ \ }\Pi ^{i}\frac{{\Large d}}{{\Large dq}^{j}}+%
\frac{{\Large d}}{{\Large dp}_{i}}\cdot \Pi _{j} \\ 
\Pi _{i}\frac{{\Large d}}{{\Large dp}_{j}}+\frac{{\Large d}}{{\Large dq}^{i}}%
\cdot \Pi ^{j} & \text{ \ \ \ \ }\Pi ^{i}\frac{{\Large d}}{{\Large dp}_{j}}+%
\frac{{\large d}}{{\Large dp}_{i}}\cdot \Pi ^{j}%
\end{array}%
\right)  \label{lpopi}
\end{equation}%
with $\frac{{\Large d}}{{\Large dq}^{i}}\cdot \Pi _{j}=\frac{{\Large d\Pi }%
_{j}}{{\Large dq}^{i}}+\Pi _{j}\frac{{\Large d}}{{\Large dq}^{i}}$ etc.
\end{proposition}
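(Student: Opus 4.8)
The plan is to derive both the Lie--Poisson equations of motion and the explicit form of the Hamiltonian operator $J_{LP}$ directly from the bracket formula (\ref{lppi}) established in the preceding proposition. First I would recall the general fact that for any Lie--Poisson bracket of the form $\{H,K\}(\mu)=\langle \mu,[\delta H/\delta\mu,\delta K/\delta\mu]\rangle$ on the dual $\mathfrak{g}^{\ast}$ of a Lie algebra, Hamilton's equations read $\dot\mu = -\,\mathrm{ad}^{\ast}_{\delta H/\delta\mu}\mu$; this is standard and follows by writing $\dot K = \{K,H\}$ for arbitrary $K$ and stripping off $\delta K/\delta\mu$ using non-degeneracy of the pairing (\ref{pair}). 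The first two equalities in (\ref{pieq}) then follow once I identify $-\,\mathrm{ad}^{\ast}_{X_h}$ with the Lie derivative $\mathcal{L}_{X_h}$ acting on $\Pi_{id}\in\mathfrak{g}^{\ast}$: this identification was already asserted in Section~\ref{group}, where it was observed that the (co)adjoint actions of $\mathfrak{g}$ on $\mathfrak{g}$ and $\mathfrak{g}^{\ast}$ are generated by the Lie derivative along the Hamiltonian vector field. So the content that genuinely needs checking is the third equality, namely that $\mathcal{L}_{\delta H/\delta\Pi_{id}}(\Pi_{id})$ equals $J_{LP}(\Pi_{id})\,\delta H/\delta\Pi_{id}$ with the matrix operator displayed in (\ref{lpopi}).

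For that step I would argue in coordinates. Write $\delta H/\delta\Pi_{id}=X_h$ as a Hamiltonian vector field with components $(X^j_q,\,X_{p\,j})$ in the $(q,p)$ splitting, and write $\Pi_{id}=\Pi_i\,dq^i+\Pi^i\,dp_i$. The Lie derivative of a one-form density along a vector field $X$ is $\mathcal{L}_X(\Pi\otimes d\mu) = (\mathcal{L}_X\Pi)\otimes d\mu + \Pi\otimes(\mathcal{L}_X d\mu)$; since here one ultimately pairs against $\mathfrak{g}$-elements and the volume is the symplectic volume, I would use Cartan's formula $\mathcal{L}_X\Pi = \iota_X d\Pi + d\,\iota_X\Pi$ together with $\mathcal{L}_X d\mu = (\nabla\!\cdot\! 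X)\,d\mu$. Expanding $\mathcal{L}_X(\Pi_a\,dz^a) = (X^b\partial_b\Pi_a + \Pi_b\,\partial_a X^b)\,dz^a$ in the $z=(q,p)$ coordinates and then regrouping the terms according to which block ($\partial/\partial q^j$ vs.\ $\partial/\partial p_j$) of the ``test direction'' $X$ they multiply will reproduce exactly the four entries of the matrix in (\ref{lpopi}): the combinations $\Pi_i\,d/dq^j + (d/dq^i)\!\cdot\!\Pi_j$ etc., with the convention $(d/dq^i)\!\cdot\!\Pi_j = (d\Pi_j/dq^i) + \Pi_j\,(d/dq^i)$ being precisely the pattern produced by a directional-derivative term plus a multiplication term coming from $\Pi_b\,\partial_a X^b$ after integration by parts against the volume. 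I would be careful with the $p$-derivative in the lower-right entry — it involves $d/dp_j$ acting from the left and $d/dp_i$ from the inside — and verify that the index placement is consistent with lowering via $\Omega^{\flat}_{T^{\ast}\mathcal{Q}}$.

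The main obstacle I anticipate is bookkeeping rather than conceptual: correctly tracking the two sources of each operator entry (the $X^b\partial_b\Pi_a$ ``transport'' piece and the $\Pi_b\partial_a X^b$ ``stretching'' piece), keeping the $d\mu$-factor straight so that integrations by parts are legitimate, and making sure the operator $J_{LP}$ is presented as a genuine (formally skew-symmetric) operator acting on $\delta H/\delta\Pi_{id}$ rather than only as a bilinear expression. A useful consistency check I would run at the end is to feed the result back into (\ref{lppi}): contracting $J_{LP}(\Pi_{id})\,\delta H/\delta\Pi_{id}$ with $\delta K/\delta\Pi_{id}$ and integrating should return $\int \Pi_{id}\cdot[\delta H/\delta\Pi_{id},\delta K/\delta\Pi_{id}]\,d\mu$, confirming both the skew-symmetry of $J_{LP}$ and the correctness of all four matrix entries; one can also specialize to the component equations (\ref{piveq})--(\ref{piueq}) with $h=h_f$ as a further sanity check, using Lemma stating $\delta H_{LP}/\delta\Pi_{id}=X_h$.
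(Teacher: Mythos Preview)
Your proposal is correct and follows the same route as the paper: derive $\dot\Pi_{id}=-\,ad^{\ast}_{\delta H/\delta\Pi_{id}}\Pi_{id}=\mathcal{L}_{\delta H/\delta\Pi_{id}}\Pi_{id}$ from the general Lie--Poisson formalism together with the identification of the coadjoint action with the Lie derivative (Section~\ref{group}), and then read off $J_{LP}$ in coordinates. The paper is in fact terser than you are---it does not spell out the Cartan-formula computation---and the only explicit verification it records is precisely your ``consistency check'': the identity $\Pi_{id}\cdot[\delta H/\delta\Pi_{id},\delta K/\delta\Pi_{id}]=(\delta H/\delta\Pi_{id})\cdot J_{LP}(\Pi_{id})(\delta K/\delta\Pi_{id})-\nabla_z\cdot\big((\delta K/\delta\Pi_{id})(\Pi_{id}\cdot\delta H/\delta\Pi_{id})\big)$, with the divergence dropping out upon integration.
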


The operator in Eq.(\ref{lpopi}) is appearently skew adjoint with respect to
the $L^{2}$-norm and satisfies the Jacobi identity by construction \cite%
{mor98},\cite{olv86}. The relation of $J_{LP}(\Pi _{id})$ to the Lie-Poisson
bracket in Eq.$\left( \ref{lppi}\right) $ is

\begin{equation*}
\Pi _{id}\cdot \left[ \frac{\delta H}{\delta \Pi _{id}},\frac{\delta K}{%
\delta \Pi _{id}}\right] =\frac{\delta H}{\delta \Pi _{id}}\cdot J(\Pi
_{id})(\frac{\delta K}{\delta \Pi _{id}})-\nabla _{z}\cdot \frac{\delta K}{%
\delta \Pi _{id}}(\Pi _{id}\cdot \frac{\delta H}{\delta \Pi _{id}})
\end{equation*}%
where the divergence term on the right disappears upon integration.

\begin{remark}
The Hamiltonian operator $J_{LP}(\Pi _{id})$ may be considered to be a map
taking a Lie algebra element $X_{h\text{ }}$in $\mathfrak{g}$\ to the
corresponding generator $ad_{X_{h\text{ }}}^{\ast }$of the coadjoint action
of $\mathfrak{g}$ on $\mathfrak{g}^{\ast }$. That is,%
\begin{equation*}
J_{LP}(\Pi _{id}):\mathfrak{g}\longrightarrow (-ad_{\mathfrak{g}}^{\ast }:%
\mathfrak{g}^{\ast }\rightarrow \mathfrak{g}^{\ast })
\end{equation*}%
and with reference to particle phase space we have%
\begin{equation*}
J_{LP}:TT^{\ast }\mathcal{Q}\longrightarrow T\mathcal{O}\subseteq T\mathfrak{%
g}^{\ast }\subseteq TT^{\ast }T^{\ast }\mathcal{Q}
\end{equation*}%
where $\mathcal{O}(\Pi _{id})=\left\{ Ad_{\varphi ^{-1}}^{\ast }(\Pi
_{id})=\varphi _{\ast }(\Pi _{id})\text{ }\mid \text{ }\varphi \in G\right\} 
$ is the coadjoint orbit through $\Pi _{id}\in \mathfrak{g}^{\ast }$. \ 
\end{remark}

\subsection{\label{mvlasov}Vlasov equations in momentum variables}

In the Lie-Poisson setting, the Poisson-Vlasov equations arise from the
Hamiltonian functional that generates the kinematical symmetries. This is
the momentum function defined by means of the momentum map%
\begin{equation}
\mathbb{J}_{L}\left( X_{k}\right) \left( \Pi _{id}\right) =\int_{T^{\ast }%
\mathcal{Q}}\left\langle \Pi _{id}\left( \mathbf{z}\right) ,X_{k}\left( 
\mathbf{z}\right) \right\rangle d\mu \left( \mathbf{z}\right)   \label{xk}
\end{equation}%
for the lifted left action of $G.$ Here the generator $X_{k}\in \mathfrak{g}$
is yet to be specified. Due to the constraint imposed by the Poisson
equation, that is the constraint variational derivative, the relevant
Hamiltonian vector field $X_{k}$ in Eq.(\ref{xk}) turns out to be the one
associated with the function $h_{f}$ given in Eq.(\ref{hf}).

\begin{proposition}
For the right invariant Hamiltonian functional in Eq.(\ref{hlppi}) the
Lie-Poisson equations on $\mathfrak{g}^{\ast }$ are%
\begin{equation}
\frac{d\Pi _{i}\left( \mathbf{z}\right) }{dt}=-X_{h}(\Pi _{i}\left( \mathbf{z%
}\right) )+e\frac{\partial ^{2}\phi _{f}\left( \mathbf{q}\right) }{\partial
q^{i}\partial q^{j}}\Pi ^{j}\left( \mathbf{z}\right)   \label{momvlaa}
\end{equation}%
\begin{equation}
\frac{d\Pi ^{i}\left( \mathbf{z}\right) }{dt}=-X_{h}(\Pi ^{i}\left( \mathbf{z%
}\right) )-\frac{1}{m}\delta ^{ij}\Pi _{j}\left( \mathbf{z}\right) \text{ \ }
\label{momvla}
\end{equation}%
with the constraint 
\begin{equation}
\nabla _{q}^{2}\phi _{\mathbf{\Pi }}\left( \mathbf{q}\right) =e\int \nabla
_{q}\cdot \mathbf{\Pi }_{p}\left( \mathbf{z}\right) \text{ }d^{3}\mathbf{p.}
\label{mompoi}
\end{equation}
\end{proposition}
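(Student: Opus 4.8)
The strategy is to apply the Lie-Poisson dynamics formula of the previous Proposition, namely $d\Pi_{id}/dt = \mathcal{L}_{\delta H/\delta\Pi_{id}}(\Pi_{id})$, and to evaluate the right-hand side using the variational derivative computed in Lemma~2. By that lemma, $\delta H_{LP}(\Pi_{id})/\delta\Pi_{id} = X_h(\mathbf{z})$, but here the relevant Hamiltonian is $h_f$ from Eq.~(\ref{hf}), so the generating vector field is $X_{h_f}$ with $h_f = p^2/2m + \tfrac12 e\phi_f(\mathbf{q})$. The key computation is therefore the Lie derivative $\mathcal{L}_{X_{h_f}}(\Pi_{id})$ of the one-form $\Pi_{id} = \Pi_i\,dq^i + \Pi^i\,dp_i$ along $X_{h_f}$.

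First I would write out $X_{h_f}$ in canonical coordinates: from $\Omega_{T^*\mathcal{Q}}^\sharp(dh_f)$ one gets $X_{h_f} = (\partial h_f/\partial p_i)\,\partial/\partial q^i - (\partial h_f/\partial q^i)\,\partial/\partial p_i = (p_i/m)\,\partial/\partial q^i - \tfrac12 e(\partial\phi_f/\partial q^i)\,\partial/\partial p_i$. Then I would apply the Cartan-type formula $\mathcal{L}_X(\alpha_a\,dz^a) = (X(\alpha_a))\,dz^a + \alpha_a\,d(X^a)$ componentwise. The $dq^i$-component picks up $X_{h_f}(\Pi_i)$ plus the contribution $\Pi_j\,\partial(X_{h_f}^{q^j})/\partial q^i = \Pi_j\,\partial(p_j/m)/\partial q^i = 0$ from the $dq$-part and $\Pi^j\,\partial(X_{h_f}^{p_j})/\partial q^i = \Pi^j\,\partial(-\tfrac12 e\,\partial\phi_f/\partial q^j)/\partial q^i = -\tfrac12 e\,(\partial^2\phi_f/\partial q^i\partial q^j)\,\Pi^j$; combined with the sign convention in $d\Pi_{id}/dt = \mathcal{L}_{X_{h_f}}(\Pi_{id})$ and the factor-of-two relation between $h_f$ and $h$, this yields Eq.~(\ref{momvlaa}), with $X_h(\Pi_i)$ absorbing the transport term. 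Similarly, the $dp_i$-component gives $X_{h_f}(\Pi^i)$ plus $\Pi_j\,\partial(X_{h_f}^{q^j})/\partial p_i = \Pi_j\,\partial(p_j/m)/\partial p_i = \delta^{ij}\Pi_j/m$ (and a vanishing $dp$-part contribution since $X_{h_f}^{p_j}$ is independent of $p$), producing Eq.~(\ref{momvla}) after attention to signs. Finally, the constraint~(\ref{mompoi}) is simply the Poisson equation~(\ref{pipoi}) re-expressed via the identification $f = \nabla_z\cdot\mathbf{\Pi}_{id}^\sharp$ of Eq.~(\ref{density}), so that $\int f\,d^3\mathbf{p} = \int \nabla_z\cdot\mathbf{\Pi}_{id}^\sharp\,d^3\mathbf{p}$; the $\partial\Pi_i/\partial p_i$ piece integrates to a boundary term in $p$ that drops, leaving $-\int\nabla_q\cdot\mathbf{\Pi}_p\,d^3\mathbf{p}$, and matching signs with Eq.~(\ref{pipoi}) gives~(\ref{mompoi}).

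The main obstacle is bookkeeping rather than conceptual: one must track carefully the two competing sign conventions (the $-ad^*$ versus $+\mathcal{L}$ identification, and the minus sign in the Lie algebra bracket $\mathfrak{g} = (\mathfrak{X}_{ham}; -[\,,\,])$), together with the factor-of-two discrepancy between $h$ and $h_f$ noted after Eq.~(\ref{hf}), so that the "free" transport terms collect correctly into the single operator $X_h(\cdot)$ acting on each component while the "potential" and "kinetic" couplings come out with the coefficients $e\,\partial^2\phi_f/\partial q^i\partial q^j$ and $-\tfrac{1}{m}\delta^{ij}$ exactly as displayed. I would double-check these by noting that summing $\partial/\partial p_i$ of~(\ref{momvlaa}) minus $\partial/\partial q^i$ of~(\ref{momvla}) must reproduce, via~(\ref{density}), the usual Vlasov equation~(\ref{vlasov}) — this consistency check pins down every sign unambiguously.
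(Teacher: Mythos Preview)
Your overall strategy --- insert the variational derivative from the preceding lemma into Eq.~(\ref{pieq}) and compute the Lie derivative of the one-form $\Pi_{id}=\Pi_i\,dq^i+\Pi^i\,dp_i$ componentwise via $(\mathcal{L}_X\alpha)_a=X(\alpha_a)+\alpha_b\,\partial_a X^b$ --- is exactly the paper's proof, which is stated in one line: use the constraint variational derivative and compute the Lie derivative of $\Pi_{id}$ with respect to $X_h$. Your derivation of the constraint~(\ref{mompoi}) from the density identification~(\ref{density}) is also correct.

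There is, however, one genuine slip. You correctly quote the lemma as $\delta H_{LP}(\Pi_{id})/\delta\Pi_{id}=X_h$ and then immediately override it, declaring ``the generating vector field is $X_{h_f}$'' and computing $\mathcal{L}_{X_{h_f}}\Pi_{id}$. That is not a bookkeeping choice to be patched afterwards: the entire content of the \emph{constraint} variational derivative in the two lemmas preceding this proposition is that the implicit dependence of $\phi_f$ on $\Pi_{id}$ through the Poisson equation already converts $h_f$ into $h$ at the level of $\delta H/\delta\Pi_{id}$. If you actually compute with $X_{h_f}$ you obtain a transport term $X_{h_f}(\Pi_i)$ whose $\nabla_p$-piece is half of what it should be, together with a coupling $\tfrac12 e\,(\partial^2\phi_f/\partial q^i\partial q^j)\,\Pi^j$; since the $\nabla_q$-piece of the transport is already correct, no single rescaling repairs both defects, and the ``factor-of-two relation between $h$ and $h_f$'' you invoke is not a scaling (the kinetic parts of $h$ and $h_f$ coincide). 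Take the Lie derivative along $X_h$ from the outset; your componentwise Cartan calculation then yields Eqs.~(\ref{momvlaa})--(\ref{momvla}) directly, with the overall sign fixed by the $(+)$Lie--Poisson convention, and no residual factor of two to chase.
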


Eqs.(\ref{momvlaa}) and (\ref{momvla}), which we call the momentum-Vlasov
equations, follow from the Lie-Poisson structure expressed in coordinates of
the momentum one-form $\Pi _{id}$ and will be shown, in the next section, to
give rise to the Vlasov equation in the density variable. The proof of
proposition follows from the constraint variational derivative of $%
H_{LP}(\Pi _{id})$ obtained before and from Eq.(\ref{pieq}) by computing,
for example, the Lie derivative of $\Pi _{id}$ with respect to $X_{h}$.

\section{Density Formulation of Dynamics}

The definition of plasma density is motivated by the following observation.
Regarding the variational derivative of functions on $\mathfrak{g}^{\ast }$
as elements of $\mathfrak{g}$ means that there are functions $h,k$ on $%
T^{\ast }\mathcal{Q}$ such that the Jacobi-Lie bracket in Eq.(\ref{lppi})
can be written as canonical Poisson bracket of $h,k$. Then the Lie-Poisson
bracket on $\mathfrak{g}^{\ast }$ becomes 
\begin{equation}
\int_{T^{\ast }\mathcal{Q}}\,(\frac{\partial \Pi _{i}(\mathbf{z})}{\partial
p_{i}}-\frac{\partial \Pi ^{i}(\mathbf{z})}{\partial q^{i}})\,\{h(\mathbf{z}%
),k(\mathbf{z})\}_{T^{\ast }\mathcal{Q}}\,d\mu (\mathbf{z})
\end{equation}%
which requires, as in the definition of dual algebra, the divergence of the
vector $\Pi _{id}^{\sharp }$ to be non-zero.

\subsection{Introducing the plasma density function}

We first show that the definition in Eq.(\ref{density}) of plasma density
function leads to the correct Lie-Poisson structure in this variable. Then
we show that the relation between formulations of plasma dynamics in the
momentum variables $\Pi _{id}$ and the plasma density function $f$ can be
made precise in terms of a momentum map. We shall remark, in the next
section, by presenting a canonical Hamiltonian system in momentum variables,
that Eqs.(\ref{momvlaa}) and (\ref{momvla}) are not unique in the sense that
they yield the Vlasov equation in density variable.

\begin{proposition}
The Hamiltonian operator $J_{LP}(\Pi _{id})$ on $\mathfrak{g}^{\ast }$
transforms into the Hamiltonian operator%
\begin{equation}
J_{LP}(f)=\nabla _{p}f\cdot \nabla _{q}-\nabla _{q}f\cdot \nabla _{p}
\label{jlpf}
\end{equation}%
on the space $Den(T^{\ast }\mathcal{Q})$ of densities under the
correspondence in Eq.(\ref{density}).
\end{proposition}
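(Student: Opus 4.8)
The plan is to establish the claim at the level of the Poisson bracket: if $H,K$ are functionals on $Den(T^{\ast}\mathcal{Q})$ and $\widetilde{H},\widetilde{K}$ denote their pull-backs to $\mathfrak{g}^{\ast}$ along the correspondence $\Pi_{id}\otimes d\mu\mapsto f\,d\mu$ of Eq.(\ref{density}), I will show that the Lie-Poisson bracket (\ref{lppi}) becomes
\[
\{\widetilde{H},\widetilde{K}\}_{LP}=\int_{T^{\ast}\mathcal{Q}}\frac{\delta H}{\delta f}\,\Big(\nabla_{p}f\cdot\nabla_{q}-\nabla_{q}f\cdot\nabla_{p}\Big)\frac{\delta K}{\delta f}\,d\mu ,
\]
which exhibits $J_{LP}(f)$ as the transformed Hamiltonian operator; equivalently one pushes the Hamiltonian vector field (\ref{pieq}) through (\ref{density}) and reads off $\dot f$.

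First I would settle the chain rule linking the two functional derivatives. Writing $f=\nabla_{z}\cdot\mathbf{\Pi}_{id}^{\sharp}$, a variation $\delta\Pi_{id}$ induces $\delta f=\nabla_{z}\cdot\delta\mathbf{\Pi}_{id}^{\sharp}$, so for $\widetilde{H}=H\circ(\nabla_{z}\cdot(\,\cdot\,)^{\sharp})$, using the pairing (\ref{pair}),
\[
D\widetilde{H}(\Pi_{id})\cdot\delta\Pi_{id}=\int_{T^{\ast}\mathcal{Q}}\frac{\delta H}{\delta f}\,\nabla_{z}\cdot\delta\mathbf{\Pi}_{id}^{\sharp}\,d\mu=\big\langle X_{\delta H/\delta f},\,\delta\Pi_{id}\otimes d\mu\big\rangle ,
\]
so that $\delta\widetilde{H}/\delta\Pi_{id}=X_{\delta H/\delta f}\in\mathfrak{g}$, the Hamiltonian vector field of the function $\delta H/\delta f$ on $T^{\ast}\mathcal{Q}$. (This specializes, for $H=H_{LP}$, to the Lemma above giving $\delta H_{LP}(\Pi_{id})/\delta\Pi_{id}=X_{h}$ with $h=\delta H_{LP}/\delta f$.) The ambiguity in $X_{h}\leftrightarrow h$ modulo constants and the kernel of $\nabla_{z}\cdot(\,\cdot\,)^{\sharp}$ cause no harm, since both pair to zero against every element of $\mathfrak{g}$ under (\ref{pair}).

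Substituting this into (\ref{lppi}) and using the Lie-algebra homomorphism $[X_{h},X_{k}]=-X_{\{h,k\}_{T^{\ast}\mathcal{Q}}}$ from section \ref{group},
\[
\{\widetilde{H},\widetilde{K}\}_{LP}=\int_{T^{\ast}\mathcal{Q}}\Pi_{id}\cdot\big[X_{\delta H/\delta f},X_{\delta K/\delta f}\big]\,d\mu=\big\langle -\,X_{\{\delta H/\delta f,\,\delta K/\delta f\}},\ \Pi_{id}\otimes d\mu\big\rangle .
\]
Applying (\ref{pair}) a second time turns the right-hand side into (up to the overall $\pm$ of the Lie-Poisson convention) $\int_{T^{\ast}\mathcal{Q}}f\,\{\delta H/\delta f,\delta K/\delta f\}_{T^{\ast}\mathcal{Q}}\,d\mu$, the standard Lie-Poisson bracket of plasma physics in the density variable. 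A single integration by parts in the canonical bracket moves the $q,p$-derivatives off $\delta H/\delta f$ and yields the displayed form, so $J_{LP}(f)=\nabla_{p}f\cdot\nabla_{q}-\nabla_{q}f\cdot\nabla_{p}$. An equivalent, purely coordinate route avoids the abstract pairing: substitute for $(\delta H/\delta\Pi_{i},\delta H/\delta\Pi^{i})$ the components of $X_{\delta H/\delta f}$ into the matrix operator (\ref{lpopi}), compute the resulting $\dot\Pi_{id}$, and form $\dot f=\partial_{p_{i}}\dot\Pi_{i}-\partial_{q^{i}}\dot\Pi^{i}$; the terms regroup into $J_{LP}(f)\,\delta H/\delta f$. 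Compactly, $J_{LP}(f)=P\circ J_{LP}(\Pi_{id})\circ P^{\ast}$ with $P=\nabla_{z}\cdot(\,\cdot\,)^{\sharp}$ and $P^{\ast}$ its $L^{2}$-adjoint, the point being that this combination depends on $\Pi_{id}$ only through $f$.

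I expect the real obstacle to be the first step rather than the algebra. Because the assignment $\Pi_{id}\mapsto f$ is only the one-sided correspondence of Eq.(\ref{density}), one must check that every admissible variation of $f$ is realized by some $\delta\Pi_{id}$ and that the attendant non-uniqueness — the divergence-free part of $\delta\mathbf{\Pi}_{id}^{\sharp}$, and additive constants in $\delta H/\delta f$ — is annihilated by the pairing (\ref{pair}), so that $\delta\widetilde{H}/\delta\Pi_{id}$ is a well-defined element of $\mathfrak{g}$. The other delicate point is sign bookkeeping: the $(+)$ versus $(-)$ Lie-Poisson convention, the $-[\,,\,]$ bracket built into $\mathfrak{g}$, and the orientation chosen for the pairing all enter the final sign and must be tracked consistently so that $J_{LP}(f)$ emerges as $\nabla_{p}f\cdot\nabla_{q}-\nabla_{q}f\cdot\nabla_{p}$ rather than its negative.
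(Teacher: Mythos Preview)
Your argument is correct, and in fact your closing compact formula $J_{LP}(f)=P\circ J_{LP}(\Pi_{id})\circ P^{\ast}$ with $P=\nabla_{z}\cdot(\,\cdot\,)^{\sharp}$ is precisely the paper's proof: the paper writes the Fr\'echet derivative of the (linear) map $\Pi_{id}\mapsto f$ as the $3\times 6$ block $D_{f}(\Pi_{id})=[\nabla_{p}\ \ -\nabla_{q}]$, invokes the standard Olver transformation rule $J_{LP}(f)=D_{f}\cdot J_{LP}(\Pi_{id})\cdot D_{f}^{\ast}$ for Hamiltonian operators under a change of dependent variables, and declares the matrix product a direct computation.

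Your \emph{main} route, however, is different and more conceptual. Computing $\delta\widetilde H/\delta\Pi_{id}=X_{\delta H/\delta f}$ via the pairing (\ref{pair}) and then collapsing the Jacobi--Lie bracket through $[X_{h},X_{k}]=-X_{\{h,k\}}$ is exactly the observation the paper records as \emph{motivation} in the paragraph immediately preceding the proposition, but does not use as the proof. Your version has the merit of making it transparent from the outset that the bracket depends on $\Pi_{id}$ only through $f$, and it bypasses the $6\times 6$ matrix algebra entirely; the paper's version is shorter to state and hands the sign conventions you worry about over to the transformation formula, at the cost of a coordinate computation left to the reader. Both are complete.
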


\begin{proof}
Regarding the definition (\ref{density}) as a transformation of Eulerian
variables we can compute the tranformation of the Hamiltonian operator $%
J_{LP}(\Pi _{id})$ as follows. The derivative of $f$ in the direction of $%
\Pi _{id}$ is a $3\times 6$ matrix of differential operators 
\begin{equation}
D_{f}(\Pi _{id}(\mathbf{z}))=[(\frac{d}{dp_{j}})\;\;\;\;-(\frac{d}{dq^{j}}%
)]=[\nabla _{p}\text{ \ }-\nabla _{q}]
\end{equation}%
which transforms the Hamiltonian operator $J_{LP}(\Pi _{id})$ according to 
\begin{equation}
J_{LP}(f)=D_{f}(\Pi _{id})\cdot J_{LP}(\Pi _{id})\cdot D_{f}^{\ast }(\Pi
_{id})  \label{trham}
\end{equation}%
where $D_{f}^{\ast }$ is the adjoint of $D_{f}$ with respect to the $L^{2}-$%
norm \cite{olv86}. A direct computation from Eq.(\ref{trham}) gives the
operator $J_{LP}(f)$. \vspace{2mm}
\end{proof}

\begin{remark}
$D_{f}^{\ast }$ transforms the Lie algebra elements, that is if $H$ is a
functional on $Den(T^{\ast }\mathcal{Q})$, then $\delta H/\delta f$ is a
function on $T^{\ast }\mathcal{Q}$ and $D_{f}^{\ast }(\delta H/\delta f)$
gives the components of the Hamiltonian vector field in $\mathfrak{g}$
corresponding to the Hamiltonian function $\delta H/\delta f$.
\end{remark}

\begin{remark}
It can also be verified that the momentum-Vlasov equations (\ref{momvlaa})
and (\ref{momvla}) yield the Vlasov equation (\ref{vlasov}) for $f$ from the
definition of the density.
\end{remark}

We recall that the Lie-Poisson structure of the Vlasov equation in density
variable is defined by the bracket%
\begin{equation}
\left\{ H(f),K(f)\right\} _{LP}=\int_{T^{\ast }\mathcal{Q}}f(\mathbf{z})%
\text{ }\left\{ \frac{\delta H}{\delta f(\mathbf{z})},\frac{\delta K}{\delta
f(\mathbf{z})}\right\} _{T^{\ast }\mathcal{Q}}\text{ }d\mu (\mathbf{z})
\end{equation}%
associated with the operator $J_{LP}(f)$ and the Hamiltonian function $%
H_{LP}(f)$ \cite{mor80}.

\subsection{Momentum map description of density}

\begin{proposition}
The momentum map for the action of the additive group $\mathcal{F}(T^{\ast }%
\mathcal{Q})$ of functions on $\mathfrak{g}^{\ast }$ defines the plasma
density.
\end{proposition}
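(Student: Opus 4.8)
The plan is to imitate the momentum-map derivation for the Poisson equation in Section 3: realise $\mathbb{J}_{\mathcal{F}(T^{\ast}\mathcal{Q})}$ as the transpose, with respect to the $L^{2}$ pairings, of a map into $\mathfrak{g}$, and then read off its value directly from the pairing in Eq.(\ref{pair}). First I would fix the action of $\mathcal{F}(T^{\ast}\mathcal{Q})$ on $\mathfrak{g}^{\ast}$: identifying $\mathcal{F}(T^{\ast}\mathcal{Q})$ with $\mathfrak{g}=\mathfrak{X}_{ham}(T^{\ast}\mathcal{Q})$ through $\lambda\mapsto X_{\lambda}=\Omega_{T^{\ast}\mathcal{Q}}^{\sharp}(d\lambda)$, its elements act on the one-form densities $\Pi_{id}\otimes d\mu$ by the Lie derivative $\mathcal{L}_{X_{\lambda}}$ (equivalently, on $(\mathfrak{g}^{\ast})^{\sharp}$ by $\Pi_{id}^{\sharp}\mapsto\Pi_{id}^{\sharp}+X_{\lambda}$ to first order, and by push-forward along the Hamiltonian flow of $\lambda$ globally); this is the ``further symmetry'' of the Lie-Poisson dynamics on $\mathfrak{g}^{\ast}$, and it preserves $\mathfrak{g}^{\ast}$ because $X_{\lambda}$ is divergence-free, so the non-degeneracy condition $\nabla_{z}\cdot\mathbf{\Pi}_{id}^{\sharp}\neq0$ of Eq.(\ref{gst}) is not disturbed.

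Next I would dualize and compute. The momentum map $\mathbb{J}_{\mathcal{F}(T^{\ast}\mathcal{Q})}:\mathfrak{g}^{\ast}\to\mathcal{F}^{\ast}(T^{\ast}\mathcal{Q})=Den(T^{\ast}\mathcal{Q})$ is the one whose pairing with a generator $\lambda$ is the linear functional $\Pi_{id}\mapsto\langle X_{\lambda},\Pi_{id}\otimes d\mu\rangle$ on $\mathfrak{g}^{\ast}$ (the function that generates the action $\mathcal{L}_{X_{\lambda}}=-ad^{\ast}_{X_{\lambda}}$ through Eq.(\ref{pieq}), since that functional has variational derivative $X_{\lambda}$). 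But by the second and third lines of Eq.(\ref{pair}), with $h$ replaced by $\lambda$,
\begin{equation*}
\langle X_{\lambda},\Pi_{id}\otimes d\mu\rangle=\int_{T^{\ast}\mathcal{Q}}\lambda(\mathbf{z})\,\nabla_{z}\cdot\mathbf{\Pi}_{id}^{\sharp}(\mathbf{z})\,d\mu(\mathbf{z})=\int_{T^{\ast}\mathcal{Q}}\lambda\,f\,d\mu=\langle f\,d\mu,\lambda\rangle
\end{equation*}
with $f$ the density of Eq.(\ref{density}). Since $\lambda$ is arbitrary, $\mathbb{J}_{\mathcal{F}(T^{\ast}\mathcal{Q})}(\Pi_{id}\otimes d\mu)=f\,d\mu$, that is, the momentum map is exactly the one-sided correspondence $\Pi_{id}\otimes d\mu\mapsto f\,d\mu$ that was used to introduce the plasma density, which is the assertion. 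I would then note the two consistency checks: equivariance holds because $\mathcal{F}(T^{\ast}\mathcal{Q})$ is commutative, so the momentum map is forced to be invariant, and indeed $f=\nabla_{z}\cdot\mathbf{\Pi}_{id}^{\sharp}$ is unchanged under $\Pi_{id}^{\sharp}\mapsto\Pi_{id}^{\sharp}+X_{\lambda}$; and this invariance is the same fact that makes the action a symmetry of the reduced system, since both $H_{LP}$ and the Lie-Poisson bracket factor through $f$ (cf. the transformation of $J_{LP}(\Pi_{id})$ into $J_{LP}(f)$).

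The step I expect to need the most care is making the symmetry precise. The assignment $\lambda\mapsto\langle X_{\lambda},\cdot\rangle$ of functionals on $\mathfrak{g}^{\ast}$ does not respect the additive bracket of $\mathcal{F}(T^{\ast}\mathcal{Q})$ --- one computes $\{\langle X_{\lambda},\cdot\rangle,\langle X_{\lambda'},\cdot\rangle\}_{LP}=-\langle X_{\{\lambda,\lambda'\}},\cdot\rangle$ --- so one must describe the action as the coadjoint action of $G$ restricted to the one-parameter subgroups generated by $\mathcal{F}(T^{\ast}\mathcal{Q})\cong\mathfrak{g}$ rather than as a genuine action of the additive group, and keep track of the resulting (trivial on $Den(T^{\ast}\mathcal{Q})$) cocycle. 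One also has to justify discarding the boundary term in the integration by parts implicit in Eq.(\ref{pair}) (decay of $\Pi_{id}$ and of $f$ on $T^{\ast}\mathcal{Q}$). With the action so fixed, the identification of the momentum map with the density is, as in Section 3, nothing more than the duality already recorded in Eq.(\ref{pair}).
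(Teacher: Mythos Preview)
Your central computation --- reading off $\mathbb{J}(\Pi_{id}\otimes d\mu)=f\,d\mu$ from the pairing in Eq.~(\ref{pair}) as the transpose of $\lambda\mapsto X_\lambda$ --- is exactly the paper's argument, and your invariance check that $f$ is unchanged under $\Pi_{id}^{\sharp}\mapsto\Pi_{id}^{\sharp}+X_\lambda$ is precisely the paper's gauge observation.

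Where you diverge is in identifying the action, and this is a genuine confusion rather than an alternative route. The paper does \emph{not} use the Lie derivative/coadjoint action; it uses the translation action
\[
(X_k,\Pi_{id})\longmapsto\Pi_{id}+\Omega_{T^{\ast}\mathcal{Q}}^{\flat}(X_k),
\]
i.e.\ addition of the exact one-form $dk$ to $\Pi_{id}$, which on the vector-field side reads $\Pi_{id}^{\sharp}\mapsto\Pi_{id}^{\sharp}+X_k$ \emph{exactly}, not merely ``to first order''. You have conflated two distinct actions: the linearisation of push-forward along the flow of $X_\lambda$ is $\Pi_{id}^{\sharp}\mapsto\Pi_{id}^{\sharp}-[X_\lambda,\Pi_{id}^{\sharp}]$, not $\Pi_{id}^{\sharp}\mapsto\Pi_{id}^{\sharp}+X_\lambda$. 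The latter is already the full (not infinitesimal) translation action, and it \emph{is} an honest action of the additive group $(\mathcal{F}(T^{\ast}\mathcal{Q}),+)$; the paper flags this with the phrase ``action of the underlying vector space of~$\mathfrak{g}$''. Once you take translation as the action, your third paragraph's difficulties dissolve: there is no failure of the additive bracket, no need to reinterpret things via coadjoint orbits, and no cocycle to track. The momentum map is then, as in Section~3, simply the dual of the linear map $k\mapsto X_k$ into $\mathfrak{g}$, and Eq.~(\ref{pair}) finishes the job --- which is the part you already have right.
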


\begin{proof}
By definition in Eq.(\ref{gst}) of $\mathfrak{g}^{\ast }$, the one-form $\Pi
_{id}$ is invariant under the addition of an exact one-form on $T^{\ast }%
\mathcal{Q}$. So, $\mathcal{F}(T^{\ast }\mathcal{Q})$ is the gauge group in
the definition of $\mathfrak{g}^{\ast }$. With reference to the
identification of $\mathfrak{g}$ with $\mathcal{F}(T^{\ast }\mathcal{Q})$,
the action of $\mathcal{F}(T^{\ast }\mathcal{Q})$ on $\mathfrak{g}^{\ast }$
by translation can be regarded as the action of $\mathfrak{g}$ on $\mathfrak{%
g}^{\ast }$ by 
\begin{equation}
(X_{k},\Pi _{id})\mapsto \Pi _{id}+\Omega _{T^{\ast }\mathcal{Q}}^{\flat
}(X_{k})  \label{action}
\end{equation}%
where $\Omega _{T^{\ast }\mathcal{Q}}^{\flat }=(\Omega _{T^{\ast }\mathcal{Q}%
}^{\sharp })^{-1}$. This can be interpreted as the action of the underlying
vector space of $\mathfrak{g}$. Thus, we consider the Lie algebra
isomorphism $\mathcal{F}(T^{\ast }\mathcal{Q})\rightarrow \mathfrak{g}%
\mathbf{:}k\mathbf{\longrightarrow }X_{k}$ for the gauge equivalent classes
of one-forms in $\mathfrak{g}^{\ast }$. The dual of this is the required
momentum map 
\begin{equation}
\mathbb{J}_{tr}:\mathfrak{g}^{\ast }\rightarrow \mathcal{F}^{\ast }(T^{\ast }%
\mathcal{Q})\equiv Den(T^{\ast }\mathcal{Q})
\end{equation}%
for the definition of the plasma density from the momentum variables. From
definitions, we have%
\begin{equation}
\left\langle \mathbb{J}_{tr}(\Pi _{id}),k\right\rangle =\left\langle \Pi
_{id},X_{k}\right\rangle =\left\langle \nabla _{z}\circ \Omega _{T^{\ast }%
\mathcal{Q}}^{\sharp }\circ \Pi _{id},k\right\rangle   \label{57}
\end{equation}%
where $\nabla _{z}$ is taken to be the dual of the exterior derivative $d.$
When evaluated in Eulerian coordinates $\mathbf{z}$ the momentum map (\ref%
{57}) gives exactly the definition (\ref{density}) of the plasma density
function.
\end{proof}

\begin{remark}
It has been argued that the physical initial conditions must satisfy $f(%
\mathbf{z},0)>0$ \cite{mor81},\cite{mor00}. This requires the description of
density by elements $\Pi _{id}\in \mathfrak{g}^{\ast }$ with $\nabla
_{z}\cdot \Pi _{id}^{\sharp }(\mathbf{z})>0$. Equivalently, in the language
of differential forms, we have $d(\Pi _{id}\wedge \Omega _{T^{\ast }\mathcal{%
Q}}^{2})>0$. Consider a six dimensional domain $\mathit{D}$ in $T^{\ast }%
\mathcal{Q}$ with boundary $\partial \mathit{D}$. Then, the positive
divergence implies 
\begin{equation}
\int_{\partial \mathit{D}}\Pi _{id}(\mathbf{z})\wedge \Omega _{T^{\ast }%
\mathcal{Q}}^{2}(\mathbf{z})>0  \label{sscs}
\end{equation}%
so that we have a volume element or an orientation for the five dimensional
boundary of the region $\mathit{D}$. This can now be related to the
nondegeneracy of the orbit symplectic structure on $\mathfrak{g}^{\ast }$.
An element of the orbit through $\Pi _{id}$ will be of the form $\mathcal{L}%
_{X_{k}}(\Pi _{id})$. By definition, the orbit symplectic structure is 
\begin{equation*}
\Omega _{\Pi _{id}}\left( \mathcal{L}_{X_{k}}(\Pi _{id}),\mathcal{L}%
_{X_{g}}(\Pi _{id})\right) =\int_{\partial \mathit{D}}\{g(\mathbf{z}),k(%
\mathbf{z})\}\text{ }\Pi _{id}(\mathbf{z})\wedge \text{ }\Omega _{T^{\ast }%
\mathcal{Q}}^{2}(\mathbf{z})
\end{equation*}%
which, by Eq.(\ref{sscs}) does not vanish for arbitrary functions $g$ and $k$%
.
\end{remark}

\section{Equivalence of Momentum and Density Formulations}

\begin{proposition}
$H_{LP}(f)=H_{LP}\left( \Pi _{id}\right) $
\end{proposition}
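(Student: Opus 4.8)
The plan is to show that the two Hamiltonian functionals, although written in different Eulerian variables, coincide as numbers once the correspondence $f(\mathbf{z}) = \nabla_z \cdot \mathbf{\Pi}_{id}^{\sharp}(\mathbf{z})$ from Eq.~(\ref{density}) is imposed, together with the Poisson constraint that ties $\phi_f$ to $f$. The starting point is the definition
$$H_{LP}(\Pi_{id}) = \int_{T^{\ast}\mathcal{Q}} \langle \Pi_{id}(\mathbf{z}), X_{h_f}(\mathbf{z})\rangle\; d\mu(\mathbf{z}),$$
and the observation, already recorded in the pairing~(\ref{pair}), that for any Hamiltonian vector field $X_{k}$ one has $\langle \Pi_{id}, X_{k}\rangle$ integrating (against $d\mu$) to $\int k\,(\nabla_z\cdot \mathbf{\Pi}_{id}^{\sharp})\,d\mu$. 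Applying this with $k = h_f$ immediately yields
$$H_{LP}(\Pi_{id}) = \int_{T^{\ast}\mathcal{Q}} h_f(\mathbf{z})\,\bigl(\nabla_z\cdot \mathbf{\Pi}_{id}^{\sharp}(\mathbf{z})\bigr)\, d\mu(\mathbf{z}) = \int_{T^{\ast}\mathcal{Q}} h_f(\mathbf{z})\, f(\mathbf{z})\, d\mu(\mathbf{z}),$$
where the last equality is precisely the density correspondence~(\ref{density}). But the right-hand side is exactly $H_{LP}(f)$ as defined in Eq.~(\ref{piham}) with the density-dependent Hamiltonian~(\ref{hf}). This closes the argument.

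The key steps, in order: first, unfold the definition~(\ref{hlppi}) of $H_{LP}(\Pi_{id})$; second, invoke the algebra–dual pairing computation from Eq.~(\ref{pair}) — this is where the integration-by-parts identity $\langle X_h,\Pi_{id}\otimes d\mu\rangle = \langle h, \nabla_z\cdot\mathbf{\Pi}_{id}^{\sharp}\otimes d\mu\rangle$ does the real work — applied with the particular choice of Hamiltonian generator $h_f$ mandated by the constraint variational derivative (as explained in section~\ref{mvlasov}); third, substitute the definition~(\ref{density}) of $f$ to rewrite the divergence as the density; fourth, recognize the resulting multiply-and-integrate expression as $H_{LP}(f)$ via Eq.~(\ref{pairfh}) and Eq.~(\ref{piham}). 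One should also note, for completeness, that the Poisson constraint~(\ref{mompoi}) is what guarantees $\phi_f$ appearing inside $h_f$ is the same object in both descriptions, so no ambiguity in the potential term arises.

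The only subtle point — and the one I would expect to require the most care — is the justification of dropping the divergence (boundary) term in the integration by parts hidden inside the pairing identity~(\ref{pair}): one needs either compact support, suitable decay of $\Pi_{id}$ and $h_f$ at infinity on $T^{\ast}\mathcal{Q}$, or a closed $\mathcal{Q}$, and this is the same technical assumption already used tacitly in Eq.~(\ref{pair}) and in the proof of Lemma (the ``$\nabla_q\cdot(\phi_f\nabla_q\phi_f)$'' divergence term). Beyond that, the proposition is essentially a restatement of the identification $\mathfrak{g}^{\ast}\equiv Den(T^{\ast}\mathcal{Q})$ at the level of the energy functional, so the proof is short; the content is entirely in having set up the pairing and the density map correctly in section~\ref{group}.
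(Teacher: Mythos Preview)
Your argument is correct and coincides with the ``Conversely'' half of the paper's own proof: expand $H_{LP}(\Pi_{id})=\int\langle\Pi_{id},X_{h_f}\rangle\,d\mu$, apply the integration-by-parts pairing identity~(\ref{pair}) to obtain $\int h_f\,(\nabla_z\cdot\mathbf{\Pi}_{id}^{\sharp})\,d\mu$, and then recognize this as $H_{LP}(f)$ via~(\ref{density}) and~(\ref{piham}). The paper additionally carries out the forward direction by substituting~(\ref{density}) into the Green's-function form~(\ref{ede}) of $H_{LP}(f)$ and integrating by parts term by term, but this is redundant for establishing equality and your streamlined route is sufficient.
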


\begin{proof}
Replace $f$ in Eq.(\ref{ede}) by its definition to get 
\begin{eqnarray*}
H_{LP}(f) &=&\int_{T^{\ast }\mathcal{Q}}\;{\frac{1}{2m}}p^{2}(\nabla
_{p}\cdot \mathbf{\Pi }_{q}(\mathbf{z})-\nabla _{q}\cdot \mathbf{\Pi }_{p}(%
\mathbf{z}))\text{ }d\mu (\mathbf{z}) \\
&&+{\frac{e^{2}}{2}}\iint\limits_{T^{\ast }Q}\nabla _{q}\cdot \mathbf{\Pi }%
_{p}(\mathbf{z})K(\mathbf{q}|\mathbf{\acute{q}})\nabla _{q^{\prime }}\cdot 
\mathbf{\Pi }_{p^{\prime }}(\mathbf{\acute{z}})\text{ }d\mu (\mathbf{z})d\mu
(\mathbf{\acute{z}})
\end{eqnarray*}%
upto divergence. The first and the second integrals are equivalent to 
\begin{equation}
-{\frac{\mathbf{p}}{m}}\cdot \mathbf{\Pi }_{q}(\mathbf{z})\text{ ,\ \ \ }-{%
\frac{e^{2}}{2}}\mathbf{\Pi }_{p}(\mathbf{z})\cdot \nabla _{q}(\;K(\mathbf{q}%
|\mathbf{\acute{q}})\nabla _{q^{\prime }}\cdot \mathbf{\Pi }_{p^{\prime }}(%
\mathbf{\acute{z}})\;)\text{,}
\end{equation}%
respectively. Using Green's function solution we obtain $H_{LP}(\Pi _{id})$.
Conversely, starting from the function $H_{LP}\left( \Pi _{id}\right) $ we
compute%
\begin{eqnarray*}
H_{LP}\left( \Pi _{id}\right)  &=&\int_{T^{\ast }\mathcal{Q}}(-\nabla
_{p}h_{f}(\mathbf{z})\cdot \mathbf{\Pi }_{q}(\mathbf{z})+\nabla _{q}h_{f}(%
\mathbf{z})\cdot \mathbf{\Pi }_{p}(\mathbf{z}))\text{ }d\mu \left( \mathbf{z}%
\right)  \\
&=&\int_{T^{\ast }\mathcal{Q}}h_{f}\left( \mathbf{z}\right) f\left( \mathbf{z%
}\right) \text{ }d\mu \left( \mathbf{z}\right) 
\end{eqnarray*}%
which verifies the equivalence of the Hamiltonian functionals of the
Lie-Poisson structures in momentum and density formulations
\end{proof}

\begin{proposition}
\label{cano}The canonical Hamiltonian system with respect to the symplectic
two form%
\begin{equation}
\omega \left( \Pi _{i},\Pi ^{i}\right) =\int_{T^{\ast }\mathcal{Q}}\delta
\Pi _{i}\left( \mathbf{z}\right) \wedge \delta \Pi ^{i}\left( \mathbf{z}%
\right) \text{ }d\mu \left( \mathbf{z}\right)  \label{pisymp}
\end{equation}%
and for the Hamiltonian functional 
\begin{equation}
H_{0}(\Pi _{id})=\int_{T^{\ast }\mathcal{Q}}\left( \Pi _{i}X_{h}(\Pi ^{i})+%
\frac{1}{2m}\delta ^{ij}\Pi _{i}\Pi _{j}+\frac{e}{2}\frac{\partial ^{2}\phi
_{f}}{\partial q^{i}\partial q^{j}}\Pi ^{i}\Pi ^{j}\right) \left( \mathbf{z}%
\right) \text{\textbf{\ }}d\mu \left( \mathbf{z}\right)  \label{quadham}
\end{equation}%
which is quadratic in the fiber coordinates of $T^{\ast }T^{\ast }\mathcal{Q}
$ gives the Vlasov equation.

\begin{proof}
First of all, the density $\mathcal{H}_{0}(\mathbf{z})$ of the Hamiltonian
functional $H_{0}$ satisfies the divergence equation 
\begin{equation}
{\frac{\partial \mathcal{H}_{0}(\mathbf{z})}{\partial t}}-\nabla _{z}\cdot
\left( {(}\Pi _{i}(X_{h}(\Pi ^{i})+\frac{1}{m}\delta ^{ij}\Pi _{j}))\mathbf{X%
}_{h}\right) (\mathbf{z})=0
\end{equation}%
which is the conservation law in Eulerian form. To obtain the canonical
equations for the Hamiltonian functional in Eq.(\ref{quadham}) we substitute
the expressions for $X_{h}$ and $\phi _{f}$ into Hamiltonian functional $%
H_{0}$ and rearrange the terms to obtain%
\begin{eqnarray*}
H_{0}\left( \Pi _{j},\Pi ^{j}\right)  &=&\int\limits_{T^{\ast }Q}\Pi
_{j}\left( \mathbf{z}\right) \left( {\frac{\delta ^{ik}p_{k}}{m}}\frac{%
\partial \Pi ^{j}}{\partial q^{i}}+\frac{1}{2m}\delta ^{ij}\Pi _{i}\right)
\left( \mathbf{z}\right) \text{ }d\mu \left( \mathbf{z}\right)  \\
&&-e^{2}\iint\limits_{T^{\ast }Q}\Pi _{j}\left( \mathbf{z}\right) \Pi
^{l}\left( \mathbf{\acute{z}}\right) \frac{\partial \Pi ^{j}\left( \mathbf{z}%
\right) }{\partial p_{i}}\frac{\partial ^{2}K\left( \mathbf{q},\mathbf{%
\acute{q}}\right) }{\partial q^{i}\partial \acute{q}^{l}}\text{ }d\mu \left( 
\mathbf{\acute{z}}\right) d\mu \left( \mathbf{z}\right)  \\
&&+\frac{e^{2}}{2}\iint\limits_{T^{\ast }Q}\Pi ^{l}\left( \mathbf{\acute{z}}%
\right) \Pi ^{i}\left( \mathbf{z}\right) \Pi ^{j}\left( \mathbf{z}\right) 
\frac{\partial ^{3}K\left( \mathbf{q},\mathbf{\acute{q}}\right) }{\partial
q^{i}\partial q^{j}\partial \acute{q}^{l}}\text{ }d\mu \left( \mathbf{\acute{%
z}}\right) d\mu \left( \mathbf{z}\right) \text{ .}
\end{eqnarray*}%
The variation with respect to the components $\Pi _{j}$ of $\mathbf{\Pi }_{q}
$ can easily be computed to give%
\begin{equation*}
\frac{\delta H_{0}}{\delta \Pi _{j}}\left( \mathbf{z}\right) =X_{h}\left(
\Pi ^{j}\left( \mathbf{z}\right) \right) +\frac{1}{m}\delta ^{ij}\Pi
_{i}\left( \mathbf{z}\right) =-\frac{d\Pi ^{j}\left( \mathbf{z}\right) }{dt}
\end{equation*}%
so that the first set of equations holds. For the other set of Hamilton's
equations, we first note that the second integral in $H_{0}$ may be written,
up to a divergence, as%
\begin{equation*}
e^{2}\iint\limits_{T^{\ast }Q}\frac{\partial \Pi _{j}\left( \mathbf{z}%
\right) }{\partial p_{i}}\Pi ^{l}\left( \mathbf{\acute{z}}\right) \Pi
^{j}\left( \mathbf{z}\right) \frac{\partial ^{2}K\left( \mathbf{q},\mathbf{%
\acute{q}}\right) }{\partial q^{i}\partial \acute{q}^{l}}\text{ }d\mu \left( 
\mathbf{\acute{z}}\right) d\mu \left( \mathbf{z}\right) 
\end{equation*}%
and the derivative of this with respect to $\Pi ^{j}\left( \mathbf{z}\right) 
$ gives%
\begin{equation}
e\frac{\partial \phi _{f}\left( \mathbf{q}\right) }{\partial q^{i}}\frac{%
\partial \Pi _{j}\left( \mathbf{z}\right) }{\partial p_{i}}+\frac{\partial }{%
\partial q^{j}}\left( e^{2}\int\limits_{T^{\ast }Q}\frac{\partial \Pi
_{k}\left( \mathbf{\acute{z}}\right) }{\partial \acute{p}_{i}}\Pi ^{k}\left( 
\mathbf{\acute{z}}\right) \frac{\partial K\left( \mathbf{q},\mathbf{\acute{q}%
}\right) }{\partial \acute{q}^{i}}\text{ }d\mu \left( \mathbf{\acute{z}}%
\right) \right) \text{.}  \label{int1}
\end{equation}%
Similarly, we compute the derivative of the last term in $H_{0}$\ with
respect to the components $\Pi ^{j}\left( \mathbf{z}\right) $ of $\mathbf{%
\Pi }_{p}$ to be%
\begin{equation}
e\Pi ^{i}\left( \mathbf{z}\right) \frac{\partial ^{2}\phi _{f}\left( \mathbf{%
q}\right) }{\partial q^{j}\partial q^{i}}+\frac{\partial }{\partial q^{j}}%
\left( \frac{e^{2}}{2}\int\limits_{T^{\ast }Q}\Pi ^{i}\left( \mathbf{\acute{z%
}}\right) \Pi ^{k}\left( \mathbf{\acute{z}}\right) \frac{\partial
^{2}K\left( \mathbf{q},\mathbf{\acute{q}}\right) }{\partial \acute{q}%
^{i}\partial \acute{q}^{k}}\text{ }d\mu \left( \mathbf{\acute{z}}\right)
\right) \text{.}  \label{int2}
\end{equation}%
Collecting these results we find%
\begin{eqnarray*}
\frac{\delta H_{0}}{\delta \Pi ^{j}}\left( \mathbf{z}\right) 
&=&-X_{h}\left( \Pi _{j}\left( \mathbf{z}\right) \right) +e\frac{\partial
^{2}\phi _{f}\left( \mathbf{q}\right) }{\partial q^{j}\partial q^{i}}\Pi
^{i}\left( \mathbf{z}\right) +\frac{\partial \Phi \left( \mathbf{q}\right) }{%
\partial q^{j}} \\
&=&\frac{d\Pi _{j}\left( \mathbf{z}\right) }{dt}+\frac{\partial \Phi \left( 
\mathbf{q}\right) }{\partial q^{j}}
\end{eqnarray*}%
where $\Phi \left( \mathbf{q}\right) $ is the function consisting of two
gradient terms in Eqs.(\ref{int1}) and (\ref{int2}). In passing to the
density formulation the additional gradient term is acted on by $\mathbf{p}$%
-divergence which identically vanishes because it is a function of $\mathbf{q%
}$ only. Thus, the Vlasov equations resulting from the canonical flow of the
quadratic Hamiltonian in Eq.(\ref{quadham}) and the momentum Vlasov
equations are the same.
\end{proof}
\end{proposition}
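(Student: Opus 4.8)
The plan is to treat the two families of fields $(\Pi_i,\Pi^i)$ as canonically conjugate for the symplectic form $\omega$ of (\ref{pisymp}), so that the Hamiltonian flow of $H_0$ is governed by the canonical equations
\[
\frac{d\Pi_j}{dt}=\frac{\delta H_0}{\delta \Pi^j}\,,\qquad
\frac{d\Pi^j}{dt}=-\frac{\delta H_0}{\delta \Pi_j}\,,
\]
and then to identify the resulting system with the momentum--Vlasov equations (\ref{momvlaa})--(\ref{momvla}). First I would make $H_0$ fully explicit in $(\Pi_i,\Pi^i)$: substitute $h=p^2/2m+e\phi_f$ into $X_h$ and replace $\phi_f$ by its Green's-function solution (\ref{greep}), with $f$ written through the definition (\ref{density}). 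This converts $H_0$ into the local quadratic kinetic term together with Coulomb-type double and triple integrals carrying kernels $\partial K$, $\partial^2 K$, $\partial^3 K$, which is precisely the rearranged form displayed above.

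Next I would compute $\delta H_0/\delta \Pi_j$. The dependence of $H_0$ on $\Pi_j$ (the $dq^i$-components of $\Pi_{id}$) is mild: it enters the kinetic term and appears in $\phi_f$ only through $\partial_{p_i}\Pi_i$ inside $f$, and the variation collapses to $X_h(\Pi^j)+\tfrac{1}{m}\delta^{ij}\Pi_i$. By the second canonical equation this is $-d\Pi^j/dt$, i.e.\ exactly (\ref{momvla}). The harder computation is $\delta H_0/\delta \Pi^j$: here $\Pi^j$ enters $X_h(\Pi^j)$ linearly but also nonlinearly and nonlocally through $\phi_f$ and $\partial^2\phi_f/\partial q^i\partial q^j$. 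I would differentiate the double and triple Coulomb integrals, use the symmetry $K(\mathbf{q}|\acute{\mathbf{q}})=K(\acute{\mathbf{q}}|\mathbf{q})$ to merge equivalent contributions, and integrate by parts in $\mathbf{q}$ and $\acute{\mathbf{q}}$ to throw derivatives onto the kernel. The outcome is $-X_h(\Pi_j)+e\,(\partial^2\phi_f/\partial q^j\partial q^i)\,\Pi^i$ plus two terms that are pure gradients in $\mathbf{q}$; collecting these into a single $\Phi(\mathbf{q})$ gives $\delta H_0/\delta \Pi^j=d\Pi_j/dt+\partial\Phi/\partial q^j$, which reproduces (\ref{momvlaa}) up to the extra gradient.

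The final step is to check that this extra gradient is invisible in the density formulation. Passing from $(\Pi_i,\Pi^i)$ to $f=\nabla_p\cdot\mathbf{\Pi}_q-\nabla_q\cdot\mathbf{\Pi}_p$ as in (\ref{density}), the $\Pi_j$-equation enters $\partial_t f$ only through $\partial_{p_j}(\,\cdot\,)$, and $\partial_{p_j}(\partial\Phi/\partial q^j)=0$ since $\Phi$ depends on $\mathbf{q}$ alone; hence $\partial_t f$ agrees with what the momentum--Vlasov system produces, which, by the earlier Remark that (\ref{momvlaa}) and (\ref{momvla}) yield the Vlasov equation for $f$, is indeed (\ref{vlasov}). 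As a consistency check I would also confirm the divergence identity for the Hamiltonian density $\mathcal{H}_0$ stated at the opening of the proof, which is simply the Eulerian conservation law for $\mathcal{H}_0$ transported along $X_h$. I expect the main obstacle to be the bookkeeping in $\delta H_0/\delta \Pi^j$: keeping straight which pieces of the differentiated Coulomb integrals assemble into the physical term $e(\partial^2\phi_f/\partial q^j\partial q^i)\Pi^i$ and which assemble into $\partial\Phi/\partial q^j$, together with the attendant integrations by parts and uses of the symmetry of $K$.
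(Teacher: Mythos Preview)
Your proposal is correct and follows essentially the same route as the paper: rewrite $H_0$ via the Green's-function representation of $\phi_f$ (so that the self-consistent dependence on $\Pi^i$ becomes explicit), compute the two variational derivatives, recover (\ref{momvla}) exactly and (\ref{momvlaa}) up to a $\mathbf{q}$-gradient $\partial_{q^j}\Phi$, and then observe that this gradient is annihilated by $\partial_{p_j}$ when forming $f$. The only minor point to tidy is your remark that $\Pi_j$ enters $\phi_f$ through $\partial_{p_i}\Pi_i$ in $f$: after integrating over $\mathbf{p}$ that piece drops out, so in the rearranged $H_0$ the potential $\phi_f$ depends on $\Pi^l$ alone, which is why $\delta H_0/\delta\Pi_j$ is as clean as you state.
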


The canonical Hamiltonian system in proposition (\ref{cano}) can be regarded
as a manifestation of the fact that, the addition of an exact one-form to $%
\Pi _{id}$ does not affect the Vlasov equation in $f$. In fact, there are
infinitely many such flows in momentum formulation due to symmetries in the
definition of momentum variables. To this end, we observe the following
symmetries associated with the momentum formulations. The quadratic
Hamiltonian $H_{0}(\Pi _{id})$ is invariant under shifts in velocity
together with a reflection in position 
\begin{equation}
\Pi _{i}\mapsto \Pi _{i}+2mX_{h}(\Pi _{i})\;,\;\;\;\Pi ^{i}\mapsto -\Pi ^{i}.
\end{equation}%
Both the Hamiltonian and the canonical symplectic structure are invariant
under arbitrary diffeomorphisms in the variables $\Pi ^{i}$. The definition (%
\ref{density}) of the plasma density $f$ is also invariant under 
\begin{equation}
\Pi _{i}\mapsto \Pi _{i}+(\nabla _{p}\times \mathbf{A}_{p})_{i}\;,\;\;\;\;%
\Pi ^{i}\mapsto \Pi ^{i}+(\nabla _{q}\times \mathbf{A}_{q})^{i}
\end{equation}%
for arbitrary vector functions $\mathbf{A}_{q}(\mathbf{z})$ and $\mathbf{A}%
_{p}(\mathbf{z})$.

Regarding the components $\Pi _{i}$ of $\mathbf{\Pi }_{q}$ as \textit{%
momentum }variables in the canonical structure of proposition (\ref{cano}),
Eqs.(\ref{momvla}) becomes inverse Legendre transformations to be solved for
the \textit{momenta}. Then, the Lagrangian functional%
\begin{equation*}
L_{0}\left[ \mathbf{\Pi }_{p}\right] =\int_{T^{\ast }\mathcal{Q}}\left( 
\frac{m}{2}|X_{h}(\mathbf{\Pi }_{p})+\frac{d\mathbf{\Pi }_{p}}{dt}|^{2}-%
\frac{e}{2}\frac{\partial ^{2}\phi _{f}}{\partial q^{i}\partial q^{j}}\Pi
^{i}\Pi ^{j}\right) \left( \mathbf{z}\right) \text{\textbf{\ }}d\mu \left( 
\mathbf{z}\right) 
\end{equation*}%
involving the velocity $d\mathbf{\Pi }_{p}/dt$ shifted by the term $-X_{h}(%
\mathbf{\Pi }_{p})$, gives the Euler-Lagrange equations%
\begin{equation*}
\ddot{\Pi}^{i}(\mathbf{z})+2X_{h}(\dot{\Pi}^{i}(\mathbf{z}))+X_{h}^{2}(\Pi
^{i}(\mathbf{z}))+\frac{e}{m}\delta ^{ij}\frac{\partial ^{2}\phi _{f}(%
\mathbf{q})}{\partial q^{k}\partial q^{j}}\Pi ^{k}(\mathbf{z})=0
\end{equation*}%
which can also be obtained from Eqs.(\ref{momvlaa}) and (\ref{momvla}) by
eliminating the variables $\Pi _{i}$. 

\section{Comparison with 2D Incompressible Fluid}

The motion of an incompressible fluid in a two dimensional region $\mathcal{M%
}$ $\mathcal{\subseteq }$ $\mathcal{%
%TCIMACRO{\U{211d} }%
%BeginExpansion
\mathbb{R}
%EndExpansion
}^{2}$ is described by volume (area) preserving diffeomorphisms. The
generators are divergence-free vectors. Since the dimension is two, these
are equivalent to canonical diffeomorphisms and the generators are canonical
Hamiltonian vectors of the form $\mathbf{v}=-X_{\psi }$ where the
Hamiltonian function $\psi $ is the stream function. The curl of $\mathbf{v}$
or, equivalently, the Laplacian of the stream function $\psi $ is the
vorticity. Geometrically, it is a two-form given by 
\begin{equation}
\omega =d\left( \mathbf{v}\cdot d\mathbf{z}\right) =\nabla _{z}^{2}\psi
dq\wedge dp,\text{ \ \ \ }\mathbf{z}=\left( q,p\right) 
\end{equation}%
or, as a function $\omega =\nabla _{z}\cdot \Omega _{M}^{\sharp }\left( 
\mathbf{v}\cdot d\mathbf{z}\right) $.

The dynamics is governed by the Euler equation in vorticity form 
\begin{equation*}
\frac{\partial \omega }{\partial t}=\left\{ \omega ,\psi \right\} _{\mathcal{%
M}}
\end{equation*}%
where $\left\{ \text{ },\text{ }\right\} _{\mathcal{M}}$ is the canonical
Poisson bracket on $\mathcal{M}.$ This is the Lie-Poisson equation on the
dual of the Lie algebra of the group of volume preserving diffeomorphisms
for the Lie-Poisson structure

\begin{equation}
\left\{ H(\omega ),K(\omega )\right\} _{LP}=\int_{\mathcal{M}}\omega (%
\mathbf{z})\left\{ \frac{\delta H}{\delta \omega (\mathbf{z})},\frac{\delta K%
}{\delta \omega (\mathbf{z})}\right\} _{\mathcal{M}}d^{2}\mathbf{z}
\end{equation}%
with the Hamiltonian functional%
\begin{equation}
H=\frac{1}{2}\int \mathbf{v}^{2}\text{ }d^{2}\mathbf{z}=\frac{1}{2}\int
\left( \nabla _{z}\psi \right) ^{2}\text{ }d^{2}\mathbf{z}=-\frac{1}{2}\int
\psi \omega \text{ }d^{2}\mathbf{z}
\end{equation}%
where a divergence term in the last expression is omitted. Alternatively,
defining the Green's function solution%
\begin{equation}
\psi \left( \mathbf{z}\right) =-\int K(\mathbf{z}|\mathbf{z}^{\prime
})\omega \left( \mathbf{z}^{\prime }\right) \text{ }d^{2}\mathbf{z}^{\prime }
\end{equation}%
for the equation $\omega =\nabla _{z}^{2}\psi $ we have 
\begin{equation*}
H=\frac{1}{2}\int \int \omega \left( \mathbf{z}\right) K(\mathbf{z}|\mathbf{z%
}^{\prime })\omega \left( \mathbf{z}^{\prime }\right) \text{ }d^{2}\mathbf{z}%
d^{2}\mathbf{z}^{\prime }.
\end{equation*}%
See references \cite{mor82},\cite{mor98} from which we extract the above
summary, and \cite{arn66}-\cite{mwxx} for more on fluid motions. 

We observe that the quadratic Hamiltonian functional for incompressible
fluid is a direct consequence of the definition of the dual of the Lie
algebra by a metric. In this case, the (weak) non-degeneracy of the pairing
is the same as the non-degeneracy of the metric and the Lie algebra can be
identified with its metric dual. On the other hand, the Lie algebra $%
\mathfrak{g}$ of Hamiltonian vector fields and its dual $\mathfrak{g}^{\ast }
$ are $L^{2}-$orthogonal in $TT^{\ast }\mathcal{Q}$.

The momentum-Vlasov equations in components of $\Pi _{id}$ expresses the
evolution of a volume cell, that is the density $f$, in the phase space $%
T^{\ast }\mathcal{Q}$ in terms of its boundaries, that is, surfaces of the
momenta $\Pi _{id}$. This interpretation was first given by Ye and Morrison
in \cite{ym92} for the Clebsch variables $(\alpha ,\beta )$ defined by $%
\left\{ \alpha ,\beta \right\} _{T^{\ast }\mathcal{Q}}=f$. In the present
context, they form a non-closed one-form $\alpha d\beta $ and can be
identified with $\Pi _{id}$.

\begin{equation*}
\begin{tabular}{|c|c|c|}
\hline
& $2D-Fluid$ & $1D-Plasma$ \\ \hline
$%
\begin{array}{c}
Configuration \\ 
space%
\end{array}%
$ & $Diff_{vol}(\mathcal{M})$ & $Diff_{can}(T^{\ast }\mathcal{Q})$ \\ \hline
$%
\begin{array}{c}
particle \\ 
motion%
\end{array}%
$ & $%
\begin{array}{c}
volume \\ 
preserving \\ 
diffeomorphisms%
\end{array}%
$ & $%
\begin{array}{c}
Hamiltonian \\ 
diffeomorphisms%
\end{array}%
$ \\ \hline
$%
\begin{array}{c}
Lie\text{ }algebra \\ 
(generators\text{ }of\text{ } \\ 
motion)%
\end{array}%
$ & $%
\begin{array}{c}
divergence-free \\ 
vector\;fields%
\end{array}%
$ & $%
\begin{array}{c}
Hamiltonian \\ 
vector\;fields%
\end{array}%
$ \\ \hline
$%
\begin{array}{l}
identification\text{ } \\ 
of\text{ }Lie\text{ } \\ 
algebra\text{ }with\; \\ 
functions%
\end{array}%
$ & $%
\begin{array}{c}
\psi :\;stream \\ 
functions%
\end{array}%
$ & $%
\begin{array}{c}
h:\;Hamiltonian \\ 
functions%
\end{array}%
$ \\ \hline
$%
\begin{array}{c}
dual\text{ }of \\ 
Lie\text{ }algebra%
\end{array}%
$ & $%
\begin{array}{c}
g^{\flat }(v):metric\text{ }dual \\ 
of\text{ }velocity%
\end{array}%
$ & $%
\begin{array}{c}
non-closed \\ 
one-forms%
\end{array}%
$ \\ \hline
$%
\begin{array}{l}
identification\text{ } \\ 
of\text{ }dual \\ 
with\text{ }function\text{ } \\ 
spaces%
\end{array}%
$ & $%
\begin{array}{c}
\omega =\nabla \circ \Omega _{\mathcal{M}}^{\sharp }\circ g^{\flat }(v) \\ 
=\nabla ^{2}\psi%
\end{array}%
$ & $f=\nabla \circ \Omega _{T^{\ast }\mathcal{Q}}^{\sharp }\circ \Pi _{id}$
\\ \hline
$L^{2}-dual$ & $two-forms$ & $%
\begin{array}{c}
non-closed \\ 
one-form\text{ }densities%
\end{array}%
$ \\ \hline
$%
\begin{array}{l}
Clebsch \\ 
variables%
\end{array}%
$ & $%
\begin{array}{c}
\begin{array}{l}
g^{\flat }(v)=\alpha d\beta \\ 
\omega =\left\{ \alpha ,\beta \right\}%
\end{array}%
\end{array}%
$ & $%
\begin{array}{c}
\begin{array}{l}
\Pi _{id}=\alpha d\beta \\ 
f=\left\{ \alpha ,\beta \right\}%
\end{array}%
\end{array}%
$ \\ \hline
$%
\begin{array}{c}
Hamiltonian \\ 
functionals%
\end{array}%
$ & $%
\begin{array}{c}
\begin{array}{r}
H=-{\frac{1}{2}}\int \omega (\mathbf{z})\psi (\mathbf{z})d^{2}\mathbf{z} \\ 
={\frac{1}{2}}\int v^{2}(\mathbf{z})d^{2}\mathbf{z}%
\end{array}%
\end{array}%
$ & $%
\begin{array}{c}
\begin{array}{r}
H_{LP}={\frac{1}{2}}\int f(\mathbf{z})h(\mathbf{z})d^{2}\mathbf{z} \\ 
={\frac{1}{2}}\int <X_{h},\Pi _{id}>d^{2}\mathbf{z}%
\end{array}%
\end{array}%
$ \\ \hline
$%
\begin{array}{c}
dynamical \\ 
equations%
\end{array}%
$ & $%
\begin{array}{c}
\begin{array}{r}
{\frac{\partial \omega }{\partial t}}=\{\psi ,\omega \} \\ 
Euler^{\prime }s\text{ }equation\text{ }in \\ 
vorticity\text{ }form%
\end{array}%
\end{array}%
$ & $%
\begin{array}{c}
\begin{array}{r}
{\frac{\partial f}{\partial t}}=\{h,f\} \\ 
Vlasov^{\prime }s\text{ }equation%
\end{array}%
\end{array}%
$ \\ \hline
$%
\begin{array}{c}
Poisson \\ 
equation%
\end{array}%
$ & $%
\begin{array}{c}
\begin{array}{c}
\nabla ^{2}\phi =\omega \\ 
\mathit{as\ }definition%
\end{array}%
\end{array}%
$ & $%
\begin{array}{c}
\begin{array}{c}
\nabla ^{2}\phi =-\int f(\mathbf{z})d^{3}\mathbf{p} \\ 
as\text{ }constraint%
\end{array}%
\end{array}%
$ \\ \hline
\end{tabular}%
\end{equation*}

\bigskip

\bigskip

Inspired from the relation $\omega =\nabla _{z}^{2}\psi $ between the
vorticity and Hamiltonian functions of 2D fluid, we can establish a similar
relation between the plasma density function $f$ and the Hamiltonian
function $h$ of particle motion. The Hessian of $h$ can be considered to be
a map $Hess(h):TT^{\ast }\mathcal{Q}\rightarrow T^{\ast }T^{\ast }\mathcal{Q}
$ which is non-degenerate if the potential function $\phi _{f}$ is
non-degenerate. Let $X\in TT^{\ast }\mathcal{Q}$, and define the vector field%
\begin{equation}
Y=\Omega _{T^{\ast }\mathcal{Q}}^{\sharp }\circ Hess(h)\circ X\text{ .}
\label{hessmap}
\end{equation}%
If $X$ is a Hamiltonian vector field with a Hamiltonian function which is at
least quadratic in momenta, then $Y$ is not Hamiltonian. In particular, if
we choose the Hamiltonian function to be $h$ and identify $Y$ with $\Pi
_{id}^{\sharp }$, then we get the Poisson equation. In other words, the
Poisson equation in plasma resembles the relation between vorticity and
stream functions of 2D fluid. This relation may also be described by
assuming a non-degenerate Lagrangian functional on $TT^{\ast }\mathcal{Q}$
which is yet to be found. Suppose we have a Lagrangian $l$ on $TT^{\ast }%
\mathcal{Q}$ quadratic in the velocities $(\mathbf{\dot{q}},\mathbf{\dot{p}})
$. We can introduce the momenta which reads%
\begin{equation*}
\Pi _{i}={\frac{\delta l}{\delta \dot{q}^{i}}}={\frac{1}{m}}%
p_{i}\;,\;\;\;\;\Pi ^{i}={\frac{\delta l}{\delta \dot{p}_{i}}}=-e\delta ^{ij}%
{\frac{\partial \phi _{f}}{\partial q^{j}}}
\end{equation*}%
for the special choice $h$ of the Hamiltonian function. Then, the definition
of plasma density in terms of momentum variables gives $f=1/m+e\nabla
_{q}^{2}\phi _{f}$. Note also that with a rescaling of $m$ and a
redefinition of $f$ we can write $f$ $=$ $tr(Hess(h)).$

\section{Conclusions}

Gauge symmetries of the Hamiltonian motion of the plasma particles leads to
the kinematical constraint described by the Poisson equation. Thus, the
Poisson type equations naturally arise in kinetic theories of particles
moving in accordance with a canonical Hamiltonian formulation. Moreover,
this implies that the true configuration space appropriate for the dynamical
formulation of the collisionless plasma motion in Eulerian variables is the
semi-direct product space $\mathcal{F}(\mathcal{Q})$ $\circledS $ $%
Diff_{can}(T^{\ast }\mathcal{Q})$. This is well suited for a geometric
understanding of the limit $c\mapsto \infty $ of the Maxwell-Vlasov
equations. 

The formulation of dynamics in density variable is obtained by further
reduction of momentum-Vlasov equations by the symmetry defining the gauge
equivalence classes of momentum variables. The gauge algebra is, as a vector
space, shown to be the same as $\mathfrak{g}$ but with an action different
from the coadjoint action. The Eulerian velocity and momenta are
complementary in the vector space $TT^{\ast }\mathcal{Q}$. Obviously, this
and other geometric properties disappear upon identification of $\mathfrak{g}
$ and $\mathfrak{g}^{\ast }$ with function spaces $\mathcal{F}(T^{\ast }%
\mathcal{Q})$ and $Den(T^{\ast }\mathcal{Q})$, respectively. As an example,
the function $h_{f}$ and the density $f$ appear symmetrically in the
Hamiltonian functional of the Lie-Poisson structure whereas the
corresponding variables $X_{h_{f}}$ and $\Pi _{id}$ are complementary in the
sense that $\Omega _{T^{\ast }\mathcal{Q}}^{\flat }(\mathfrak{g})$ and $%
\mathfrak{g}^{\ast }$ decompose the space of one-forms on $T^{\ast }\mathcal{%
Q}$ into spaces of exact and non-closed one-forms, respectively. The
momentum formulation clarifies the geometric relation between the motions of
plasma particles and the Lie-Poisson description of dynamics \cite{gpd2}. We
expect the space $TT^{\ast }\mathcal{Q}$ be important for Euler-Poincar\'{e}
formulation of dynamics \cite{poi01},\cite{poi01b},\cite{hmr98}, and for
application of Tulczyjew construction for Legendre transformation \cite{tulc}
from Lie-Poisson formulation. 

\section{Acknowledgement}

I am grateful to Jerry Marsden, Phil Morrison, Ahmet Aydemir and O\u{g}ul
Esen for many collaborations at various stages of this and other works on
geometry of plasma dynamics.

\end{document}